\DeclareMathOperator{\subf}{Sf}  % the set of subformulae
\DeclareMathOperator{\rf}{Rf} % reference formula
\DeclareMathOperator{\Prop}{\Phi} %Set of proposition symbols
\newcommand{\X}{X}
\newcommand{\Y}{Y}
\DeclareMathOperator{\Lbl}{\Lambda} %Set of label symbols
\newcommand{\op}[1]{\widehat{#1}} %Adaptive height
\newcommand{\opM}[1]{\widehat{\vphantom\wedge\smash{#1}}} %Medium
\newcommand{\Logicname}[1]{\ensuremath{\mathsf{#1}}}
\newcommand{\GTS}{\Logicname{GTS}\xspace}
\newcommand{\FBS}{\Logicname{FBS}\xspace}
\newcommand{\CTL}{\Logicname{CTL}\xspace}
\newcommand{\ATL}{\Logicname{ATL}\xspace}
\newcommand{\LTL}{\Logicname{LTL}\xspace}
\newcommand{\card}{\ensuremath{\mathsf{card}\xspace}}
\newcommand{\tlimbound}{\Gamma} %Upper bound for allowed time limits
\theoremstyle{plain}
\newtheorem{theorem}{Theorem}[section]
\newtheorem{lemma}[theorem]{Lemma}
\newtheorem{corollary}[theorem]{Corollary}
\newtheorem{proposition}[theorem]{Proposition}
\theoremstyle{definition}
\newtheorem{definition}[theorem]{Definition}
\newtheorem{remark}[theorem]{Remark}
\newtheorem{example}[theorem]{Example}
\newcommand{\problemclass}{\mathrm}
\title{Bounded Game-Theoretic Semantics for Modal Mu-Calculus and Some Variants}
\author{Lauri Hella
\institute{Tampere University\\ Finland}
\email{lauri.hella@tuni.fi}
\and
Antti Kuusisto
%\thanks{
%Funded by the Academy of 
%Finland project \emph{Theory of Computational Logics},
%grant numbers 324435 and 328987.}
\institute{University of Helsinki and Tampere University\\
Finland}
\email{\quad antti.kuusisto@helsinki.fi}
\and
Raine R\"{o}nnholm
%\thanks{Supported by Jenny and Antti Wihuri Foundation.}
\institute{ENS Paris-Saclay\\
France}
\email{\quad raine.ronnholm@tuni.fi}
}
\begin{document}
\maketitle

\begin{abstract}
We introduce a new game-theoretic semantics (GTS) for the modal mu-calculus. Our so-called bounded GTS replaces parity games with alternative evaluation games where only finite paths arise; infinite paths are not needed even when the considered transition system is infinite. The novel games offer alternative approaches to various constructions in the framework of the mu-calculus. For example, they have already been successfully used as a basis for an approach leading to a natural formula size game for the logic. While our main focus is introducing the new GTS, we also consider some applications to demonstrate its uses. For example, we consider a natural model transformation procedure that reduces model checking games to checking a single, fixed formula in the constructed models, and we also use the GTS to identify new alternative variants of the mu-calculus with PTime model checking.
\end{abstract}

%%%%%%%%%%%%%%%%%%%%%%%%%%%%%%%%

\section{Introduction}

The modal $\mu$-calculus \cite{Kozen83} is a well-known formalism that plays a
central role in, e.g., program verification.  The standard
semantics of the $\mu$-calculus is based on fixed points, but
the system has also a well-known game theoretic
semantics (\GTS) that makes use of parity games.
The related games generally involve infinite
plays, and the parity condition is used for determining
the winner (see, e.g., \cite{bradfield} for further details and a 
general introduction to the $\mu$-calclus).

\paragraph{The agenda and contributions of this article.}

In this article we present an alternative
game-theoretic semantics for the modal $\mu$-calculus.
Our so-called \emph{bounded} \GTS is
based on games that resemble the standard semantic games for
the $\mu$-calculus, but there is an extra feature
that ensures that the plays within the novel framework
always end after a finite number of rounds.
Thereby only finite paths arise in related 
evaluation games \emph{even when investigating infinite transition systems}.
Thus there is no need to keep track of the parity
condition, so in that sense
the games we present in this article simplify the standard framework.
Furthermore, they offer an alternative perspective on the $\mu$-calculus, as we show that our
semantics is equivalent to the standard one. 
In the novel games, the evaluation of a fixed point formula begins
by one of the players declaring an ordinal number; the verifying player declares ordinals for $\mu$-formulae
and the falsifying player for $\nu$-formulae.
The declared ordinal is then lowered as the game proceeds, and
since ordinals are well-founded, the game will
indeed end in finite time, i.e.,
after a finite number of game steps.
In general, infinite ordinals are needed in the games, but
finite ordinals suffice in finite models.
While the bounded \GTS provides a new perspective on 
the standard modal {$\mu$-calculus}, our approach also leads 
naturally to a range of alternative semantic systems
that are not equivalent to
the standard semantics. Indeed, we divide the framework of bounded semantics into 
subsystems dubbed \emph{$\Gamma$-bounded semantics} for different ordinals $\Gamma$.
%Lauri
%The ordinals provide 
Here $\Gamma$ provides
a strict upper limit for the ordinals that can be used during the game play.
For each $\Gamma$-bounded semantics, we define also a compositional semantics and 
prove the game-theoretic and compositional versions equivalent.

If only finite ordinals are allowed, meaning $\Gamma = \omega$, we
obtain the \emph{finitely bounded} \GTS, which is an interesting system itself.
While this semantics is
equivalent to the standard case in finite models, the general expressive powers differ.
Indeed, we will show that the $\mu$-calculus under finitely bounded $\GTS$ does not
have the finite model property. Furthermore, we
observe that the set of validities of the $\mu$-calculus under finitely bounded semantics is
strictly contained in the set of standard validities.

We then introduce yet another class of variants of the bounded \GTS consisting of the
systems of \emph{$f$-bounded semantics}. In the $\Gamma$-bounded
semantics, each $\mu$ and $\nu$-formula is associated with an ordinal of its own, while in
the $f$-bounded semantics this scheme is relaxed and only two ordinals are used, one for all $\mu$-formulae 
and another one for all $\nu$-formulae. The particular ordinals fixed in the beginning of the game
depend on the particular variant of $f$-bounded semantics. We prove PTime-completeness of 
the model checking problem of a range of simple yet expressive systems of $f$-semantics.
The result concerns both data and combined complexity. In addition to semantic
studies, we use \GTS to identify a 
canonical reduction of $\mu$-calculus model checking instances 
to checking a single,
uniform formula in the model obtained by the reduction. 

\paragraph{Further motivation of the study.}

While the formal results listed above are an important part of our study, the
focus of our article is mainly on the \emph{conceptual development} of the theory of
the $\mu$-calculus and related systems, not so much the more technical directions.
While some of the technical results we obtain have straightforward and \emph{obvious implicit similarities} to 
existing notions, such as finite approximants of fixed points, we
believe the systematic, formal and conceptual study
initiated in this article is justified.

Indeed, we believe the bounded \GTS in general can be a fruitful framework for various 
%different kinds of 
further developments.
The setting provides an alternative perspective to parity
games, replacing infinite plays with games based on
finitely many rounds only, thereby leading to a
conceptually interesting territory to be explored further. The fragments with PTime-model
checking we identify serve as an example of the various %different 
possibilities.
Furthermore, it is worth noting here, e.g., that the difference between
the standard and bounded \GTS for the $\mu$-calculus is
analogous to the relationship between while-loops and for-loops;
while-loops are iterated possibly infinitely long, whereas for-loops 
run for $k\in\mathbb{N}$ rounds, where $k$ can
generally be an input to the loop. Finally, we argue that the 
new semantics can quite often make formulae easier to read; we will illustrate this in Examples \ref{ex: Game} and \ref{ex: Strategy}.

\paragraph{Notes on related work.}

There already exist several works where simple variants of
the bounded semantics have been considered in the context of
temporal logics with a significantly simpler recursion 
mechanisms than that of the $\mu$-calculus. The papers \cite{atl}, \cite{acmtrans2018}
consider a bounded semantics for the Alternating-time temporal logic \ATL, and \cite{atl2}, \cite{infcomp} extend the related study to
the extension $\ATL^+$ of \ATL.
See also \cite{time2017}, \cite{tcs2019}. Part of the original 
motivation behind the studies in \cite{atl}, \cite{acmtrans2018}, \cite{atl2}, \cite{infcomp}
(as well as the current article)
relates to work with the direct aim of understanding
variants and fragments of the general, expressively 
Turing-complete logic presented in \cite{turingcomplete}.
It is also worth mentioning that the work in the 
present article has 
already been made essential use of in 
constructing a canonical
formula size game for the $\mu$-calculus in \cite{sizegame}.
The first short draft of
the current submission
appeared in 2017 as the arXiv manuscript \cite{techrepmu1}.
It contained only the game-theoretic
semantics presented below; the current article is
the extended, full conference version of that short draft.
The extended preprint of the current article is also available as the 
arXiv manuscript \cite{techrepmu2}, containing full technical details.

There is a whole range of earlier but closely
related logical studies that make use of notions 
with similar intuitions to the ones
behind the bounded semantics of this paper. 
Indeed,  for logics with time bounds, see, e.g.,
the paper \cite{toplas/AlurH98} on finitary
fairness and the article \cite{Kupferman2007} relating to
promptness in Linear temporal logic \LTL. We also mention here the
work related to \emph{bounded model checking}, see, e.g., \cite{BiereCCZ99},
\cite{PenczekWZ02} and \cite{Zhang15}.
The article \cite{vardi} is one example of an early work that
uses explicit `clocking' of fixed point formulae in (a variant of)
the $\mu$-calculus, thereby involving some
ideas that bear a similarity to some
features used also in the present paper. However, the approach and goals of 
\cite{vardi} are different, e.g., the 
paper limits to finite models only and does not
discuss game-theoretic semantics at all.
%
%\cite{BernetJW02}
%
%The recent article \cite{arnopauly} considers a
%game with finite counters that bears a resemblance to our game. However,
%the study does not discuss logic in any explicit way,
%and there are also other major technical differences to our setting. 
%We discuss this further in Section \ref{boundedgametheorsem}.

%by discussing
%finite models only (and thereby not involving
%\emph{ordinal valued} clocks essential in our study. More importantly,
%the game does not involve resets of clocks Further
%more, it does not link directly to logic
%from the setting of \cite{arnopauly} by
%employing general ordinal valued counters and 
%making use of resetting counter values.
%However, the main approach and goals of 
%\cite{vardi} are different, e.g., the 
%paper limits to finite models only and does not
%discuss game-theoretic semantics at all.

%uses some ideas similar to ours by
%explicitly `clocking' fixed point formulae in
%the (a certani variant of the) $\mu$-calculus.
%However, the approach there
%differs significantly from ours, e.g., due to 
%considering compositional
%semantics and limiting to finite models only. 

%\subsection{Structure of the paper}

%The best possible.

%%%%%%%%%%%%%%%%%%%%%

\section{Preliminaries}

\subsection{Syntax}

Let $\Prop$ be  a set of \emph{proposition symbols} and $\Lbl$ a set of \emph{label symbols}. 
Formulae of the modal $\mu$-calculus are generated by
the grammar
\begin{align*}
	\varphi ::= p \mid \neg p \mid \X \mid \varphi\vee\varphi \mid \varphi\wedge\varphi \mid \Diamond\varphi \mid \Box\varphi
			\mid \mu\X\varphi \mid \nu\X\varphi,
\end{align*}
where $p\in\Prop$ and $\X\in\Lbl$.

%
\begin{comment}
\medskip

We could also use the proposition symbols as label symbols, as usually has been done with $\mu$-calculus. But we have decided to keep them separate, as they function very differently in \GTS. They can rather be seen as variables which point to formulae instead of pointing to sets of possible worlds. We will examine this topic more later on.

\medskip
\end{comment}

Let $\varphi$ be a formula of the $\mu$-calculus.
The set of nodes in the syntax tree of $\varphi$ is denoted by $\subf(\varphi)$. All of these nodes correspond to some subformula of $\varphi$, but the same subformula may have several occurrences in the syntax tree of $\varphi$, as for example in the case of $p\vee p$. We  always distinguish between different occurrences of the same subformula, and thus we always assume that the position in the syntax tree of $\varphi$ is known for any
given subformula of $\varphi$. 
We also use the following notation:
\[
	\subf_{\mu\nu}(\varphi) := 
	\{\theta\in\subf(\varphi) \mid \theta=\mu\X\psi \text{ or } \theta=\nu\X\psi \text{ for some }\psi\in\subf(\varphi) 
	\text{ and }X\in \Lambda\}.
\]

%%%%%%%%%%%%%%%%%%%%%

\subsection{Standard Compositional Semantics}

\emph{A Kripke model} $\mathcal{M}$ is a tuple $(W, R, V)$, where $W$ is a nonempty set, $R$ a binary relation over $W$ and $V:\Phi\rightarrow\mathcal{P}(W)$ a valuation for proposition symbols in $\Prop$.
An \emph{assignment} $s:\Lbl\rightarrow\mathcal{P}(W)$ for $\mathcal{M}$ maps label symbols $\X$ to subsets of $W$.

\begin{definition}
Let $\mathcal{M}=(W,R,V)$ be a Kripke model, $w\in W$. Let $\varphi$ be a
formula of the $\mu$-calculus. 
\emph{Truth of $\varphi$ in $\mathcal{M}$ and $w$ under 
assignment $s$},
denoted by $\mathcal{M},w\vDash_s\varphi$, 
is defined as in standard modal logic for $p$, $\neg p$, $\vee$, $\wedge$, $\Diamond$, $\Box$.
%
%
%
%\begin{itemize}
%\item $\mathcal{M},w\vDash_s p$ iff $w\in V(p)$.
%\item $\mathcal{M},w\vDash_s\neg p$ iff $w\notin V(p)$.
%\item $\mathcal{M},w\vDash_s\X$ iff $w\in s(\X)$.
%\item $\mathcal{M},w\vDash_s\psi\vee\theta$ iff $\mathcal{M},w\vDash_s\psi$ or %$\mathcal{M},w\vDash_s\theta$.
%\item $\mathcal{M},w\vDash_s\psi\wedge\theta$ iff $\mathcal{M},w\vDash_s\psi$ and %$\mathcal{M},w\vDash_s\theta$.
%\item $\mathcal{M},w\vDash_s\Diamond\psi$ iff there is $v\in W$ s.t. $wRv$ and $\mathcal{M},v\vDash_s\psi$.
%\item $\mathcal{M},w\vDash_s\Box\psi$ iff $\mathcal{M},v\vDash_s\psi$ for all $v\in W$ for which $wRv$.
%\end{itemize}
%
The truth condition for label symbols is defined with respect to the assignment $s$:
\begin{itemize}
\item $\mathcal{M},w\vDash_s\X$ iff $w\in s(\X)$.
\end{itemize}
To deal with $\mu$ and $\nu$, we define an operator $\op{\varphi}_{\X,s}:\mathcal{P}(W)\rightarrow\mathcal{P}(W)$ such that
%\begin{align*}
	$\op{\varphi}_{\X,s}(A) = \{w\in W\mid \mathcal{M},w\vDash_{s[A/\X]}\varphi\}$,
%\end{align*}
%
where $s[A/X]$ is the assignment that sends $X$ to $A$ and
treats other label symbols the same way as $s$.
%
%\begin{proposition}
The operators $\op{\varphi}_{\X,s}$ are always
monotone and thereby have least
and greatest fixed points.
%\end{proposition}
%
\noindent
The semantics for the operators $\mu\X$ and $\nu\X$ is as follows:
\begin{itemize}
\item $\mathcal{M},w\vDash_s\mu\X\psi$ iff $w$ is in the least fixed point of the operator $\opM{\psi}_{\X,s}$.
\item $\mathcal{M},w\vDash_s\nu\X\psi$ iff $w$ is in the greatest fixed point of the operator $\opM{\psi}_{\X,s}$.
\end{itemize}
\end{definition}

%A label symbol $\X$ is said to occur \emph{free} in a formula $\varphi$ if it is not a subformula of any formula of the form $\mu\X\psi$ or $\nu\X\psi$. A formula $\varphi$ is called a \emph{sentence} if it does not contain any free label symbols. 
A label symbol $\X$ is said to occur \emph{free} in a formula $\varphi$ if it occurs in $\varphi$ but is not a subformula of any subformula of $\varphi$ of the form $\mu\X\psi$ or $\nu\X\psi$. A formula $\varphi$ is
called a \emph{sentence} if it does not contain any free label symbols. 
%\lnote{Corrected this; can be shortened by using $\subf$.}
Truth of a
sentence $\varphi$ is independent of assignments $s$, so 
we may simply write $\mathcal{M},w\vDash\varphi$
instead of $\mathcal{M},w\vDash_s\varphi$.

%%%%%%%%%%%%%%%%%%%%%

\subsection{Alternating Reachability Games}\label{alternatingreachabilitysection}

The \emph{alternating reachability game} problem,
which is well known to be PTime-complete (see, e.g., \cite{immerman99}), is 
defined as follows.
The input to the problem is a finite
pointed model $(\mathcal{M},w)$, i.e., $\mathcal{M}$ is 
Kripke model and $w$ a state in it. We assume the
vocabulary of $\mathcal{M}$ contains the 
proposition symbols $p_B$ and $q_B$.
The game is played by two
players, $A$ and $B$, starting from
the state $w$. In each round, one of the players moves (if possible) to 
some state that can be
directly reached in one step from the current state via the accessibility 
relation; if $q_{B}$
holds in the current state, then $B$ moves, and
otherwise $A$ moves. 
If the players 
reach a state where $p_{B}$ holds,
then the game ends and $B$ wins. 
If a player cannot make the required
move in some state (meaning the state is a dead end), then the game ends 
and that player loses and the other
player wins. If the game does not end in a
finite number of moves, then $A$ wins.
The alternating reachability game problem yields the
answer \emph{yes} on the input $(\mathcal{M},w)$ 
iff $B$ has a winning strategy in the game.
%It is clear that the game can be played also on infinite models.
We let $\problemclass{AR}$ denote the class of all positive instances of
the alternating reachability game problem. %, including infinite models. The corresponding class of finite instances is $\problemclass{AR}_{\mathit{fin}}$.
The following observation is well known.

\begin{proposition}\label{alternatingreachabilityproposition}
Let $\mathcal{M}$ be a Kripke model with propositional
vocabulary $\{p_\mathrm{B},q_B\}$ and let $w$ be a state in $\mathcal{M}$.
Then
$
(\mathcal{M},w)\in\problemclass{AR}$ if and only if $\mathcal{M},w\vDash\chi,
$
where 
$
\chi=\mu X\bigl(p_B\vee (q_B \wedge \Diamond X)\vee(\neg q_{B}
\wedge \Box X)\bigr).
$
\end{proposition}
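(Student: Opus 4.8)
The plan is to unwind the least fixed point in the statement into its ordinal approximants and match these, stage by stage, against the number of rounds $B$ needs to force a win. Write $\psi(X)$ for the body $p_B\vee(q_B\wedge\Diamond X)\vee(\neg q_B\wedge\Box X)$ and set $L_0=\emptyset$, $L_{\alpha+1}=\opM{\psi}_{X,s}(L_\alpha)$, and $L_\lambda=\bigcup_{\alpha<\lambda}L_\alpha$ at limits, so that the least fixed point is $L=\bigcup_\alpha L_\alpha$ and $\mathcal{M},w\vDash\chi$ iff $w\in L$ (since $\chi$ is a sentence, the assignment $s$ is irrelevant). The guiding observation is that one unfolding of $\psi$ describes exactly one round of the game: $p_B$ is an immediate win for $B$; the disjunct $q_B\wedge\Diamond X$ says that at states where $B$ moves, he may step into $X$; and $\neg q_B\wedge\Box X$ says that at states where $A$ moves, every step stays in $X$. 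In particular $L_1$ is the set of $p_B$-states together with the $\neg q_B$-dead ends, which are precisely the one-step wins for $B$ (in the latter case $A$ must move, cannot, and loses).

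For the direction $w\in L\Rightarrow(\mathcal{M},w)\in\problemclass{AR}$, I would assign to each $v\in L$ its rank, the least ordinal $\alpha$ with $v\in L_\alpha$, which is necessarily a successor $\delta+1$, and prove by transfinite induction on the rank that $B$ wins from $v$. Here $v\in L_{\delta+1}$ means $v$ satisfies $\psi(L_\delta)$. If $p_B$ holds at $v$, then $B$ wins at once. If $q_B$ holds, then some successor $v'$ of $v$ lies in $L_\delta$, and $B$ moves there, strictly lowering the rank. If $\neg q_B$ holds, then every successor of $v$ lies in $L_\delta$, so whatever $A$ does the rank drops, and if $A$ cannot move at all then $A$ loses. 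Since the ordinals are well-founded, this strategy drives the rank strictly downward and hence terminates with a win for $B$ after finitely many rounds.

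For the converse I would argue contrapositively and show that $\overline{L}=W\setminus L$ is a region from which $A$ cannot be dislodged and in which $B$ can never win. Because $L$ is a fixed point, $v\notin L$ means $v$ fails $\psi(L)$: so $p_B$ is false at $v$; if $q_B$ holds then every successor of $v$ again lies in $\overline{L}$ (and if $v$ is a dead end then $B$, who must move, loses); and if $\neg q_B$ holds then some successor lies in $\overline{L}$, so $A$ can move and keep the play inside $\overline{L}$. Letting $A$ always pick such a successor maintains the invariant ``the current state is in $\overline{L}$'' against \emph{every} strategy of $B$. Since no state of $\overline{L}$ satisfies $p_B$ and $A$ is never stuck, $B$ can only ever reach a $q_B$-dead end (a loss for $B$) or play forever (again a loss for $B$). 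Hence $B$ has no winning strategy, i.e.\ $(\mathcal{M},w)\notin\problemclass{AR}$; note that this exhibits a concrete strategy for $A$, so determinacy need not be invoked.

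The step I expect to be the most delicate is not the fixed-point machinery but the careful bookkeeping of the two dead-end cases and their interaction with whose turn it is: one must check that a $\neg q_B$-dead end counts as a win for $B$ and already appears in $L_1$, whereas a $q_B$-dead end is a win for $A$ and stays in $\overline{L}$. Keeping the $\Diamond$/$\Box$ versus $B$/$A$ correspondence exactly aligned with these boundary cases is where the argument must be written with the most care.
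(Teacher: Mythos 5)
Your proof is correct: the rank induction on the least ordinal $\alpha$ with $v\in L_\alpha$ for the forward direction, and the trap argument on $W\setminus L$ (using that $L$ is a fixed point) for the converse, handle all the cases correctly, including the two dead-end boundary cases you flag. The paper itself gives no proof of this proposition---it is stated as a well-known observation---and your argument is precisely the standard one that is being alluded to, so there is nothing to contrast it with.
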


\section{Bounded Game-Theoretic Semantics}\label{boundedgametheorsem}

The general 
idea of game-theoretic semantics (\GTS) is that
truth of a formula $\varphi$ is 
checked in a model $\mathcal{M}$ via playing a game where a proponent
player (Eloise) attempts to show that $\varphi$ holds in $\mathcal{M}$
while an opponent player (Abelard) tries to
establish the opposite---that $\varphi$ is false.
In this section we define a \emph{bounded game-theoretic semantics} for
the $\mu$-calculus, or bounded \GTS. The semantics shares some features 
with a similar \GTS for the \emph{Alternating-time temporal logic} (\ATL)
defined in \cite{atl} (see also \cite{atl2}). 

\subsection{Bounded Evaluation Games}

Let $\varphi$ be a sentence of the $\mu$-calculus and $\X\in\subf(\varphi)$. The \emph{reference formula of $\X$},
denoted $\rf(\X)$, is the unique
subformula of $\varphi$ that \emph{binds} $\X$.
That is, $\rf(\X)$ is of the form $\mu\X\psi$ or $\nu\X\psi$
and there is no other operator $\mu X$ or $\nu X$
in the syntax tree on the path from $\rf(\X)$ to $\X$.
%$\rf(\X)$ is of the form $\mu\X\psi$ or $\nu\X\psi$ for
%some $\psi$, $\X\in\subf(\rf(X))$
%
%and there is no $\theta\in
%\subf_{\mu\nu}(\rf(\X))\setminus\{\rf(\X)\}$
%s.t. $\X\in\subf(\theta)$
%and $\theta$ is of the form $\mu\X\psi$ or $\nu\X\psi$.
Since $\varphi$ is a
sentence, every label symbol has a reference formula (and the reference formula is by definition unique for each label symbol).

\begin{example}\label{ex: Sentence}
Consider the sentence 
%\[
    $\varphi^* := \nu X\Box\mu Y\bigl(\Diamond Y \vee (p\wedge X)\bigr)$.
%\]
Here we have $\rf(X)=\varphi^*$ and $\rf(Y)=\mu Y(\Diamond Y \vee (p\wedge X))$.
\end{example}

\begin{definition}
Let $\mathcal{M}$ be a model, $w_0\in W$, $\varphi_0$ a sentence and $\tlimbound>0$ an ordinal. We define the \emph{$\tlimbound$-bounded evaluation game} $\mathcal{G}=(\mathcal{M},w_0,\varphi_0,\tlimbound)$ as follows.
The game has two players, \emph{Abelard} and \emph{Eloise}. The \emph{positions} of the game are of the form $(w,\varphi,c)$, where $w\in W$, $\varphi\in\subf(\varphi_0)$ and
\[
	c:\subf_{\mu\nu}(\varphi_0)\;\rightarrow\; \{\gamma\mid\gamma\leq\tlimbound\}
\]
is a \emph{clock mapping}. We call the value $c(\theta)$ the \emph{clock value of} $\theta$ (for $\theta\in\subf_{\mu\nu}(\varphi_0)$).

The game begins from the \emph{initial position} $(w_0,\varphi_0,c_0)$, where $c_0(\theta)=\tlimbound$ for every $\theta\in\subf_{\mu\nu}(\varphi_0)$. The game is then played according to the following rules:
\begin{itemize}
\item In a position $(w,p,c)$ for some $p\in\Prop$, Eloise wins if $w\in V(p)$. Otherwise Abelard wins.
\item In a position $(w,\neg p,c)$ for some $p\in\Prop$, Eloise wins if $w\notin V(p)$. Otherwise Abelard wins.
\item In a position $(w,\psi\vee\theta,c)$, Eloise selects whether the next position is $(w,\psi,c)$ or $(w,\theta,c)$.
\item In a position $(w,\psi\wedge\theta,c)$, Abelard selects whether the next position is $(w,\psi,c)$ or $(w,\theta,c)$.
\item In a position $(w,\Diamond\psi,c)$,
Eloise selects some $v\in W$ such that $wRv$ and the
next position is $(v,\psi,c)$. If
there is no such $v$, then Abelard wins.
\item In a position $(w,\Box\psi,c)$, Abelard selects some $v\in W$ such that $wRv$ and the next position is $(v,\psi,c)$. If there is no such $v$, then Eloise wins.
\item In a position $(w,\mu\X\psi,c)$, Eloise chooses an ordinal
$\gamma<\tlimbound$. Then the game continues from the position $(w,\psi,c[\gamma/\mu\X\psi])$. Here $c[\gamma/\mu\X\psi]$ is
the clock mapping that sends $\mu\X\psi$ to $\gamma$ and treats other formulae as $c$.
\item In a position $(w,\nu\X\psi,c)$, Abelard chooses an ordinal $\gamma<\tlimbound$. Then the game continues from the position $(w,\psi,c[\gamma/\nu\X\psi])$.

\item Suppose that the game is in a position $(w,\X,c)$ and let $c(\rf(\X))=\gamma$.
\begin{enumerate}
\item Suppose that $\rf(\X)=\mu\X\psi$ for some $\psi$.
\begin{itemize}
\item If $\gamma=0$, then Abelard wins.
\item Else, Eloise must select some $\gamma'<\gamma$, and then the game continues from the position $(w,\psi,c')$, where 
\begin{itemize}
\item $c'(\mu\X\psi)=\gamma'$,
\item $c'(\theta)=\Gamma$ 
for all $\theta\in\subf_{\mu\nu}(\varphi_0)$ s.t. $\theta\in\subf(\psi)$,
\item $c'(\theta)=c(\theta)$ for all other $\theta\in\subf_{\mu\nu}(\varphi_0)$.
\end{itemize}
\end{itemize}
\item Suppose that $\rf(\X)=\nu\X\psi$ for some $\psi$.
\begin{itemize}
\item If $\gamma=0$, then Eloise wins.
\item Else, Abelard must select some $\gamma'<\gamma$, and then the game continues from the position $(w,\psi,c')$, where 
\begin{itemize}
\item $c'(\nu\X\psi)=\gamma'$,
\item $c'(\theta)=\Gamma$ 
for all $\theta\in\subf_{\mu\nu}(\varphi_0)$ s.t. $\theta\in\subf(\psi)$,
\item $c'(\theta)=c(\theta)$ for all other $\theta\in\subf_{\mu\nu}(\varphi_0)$.
\end{itemize}
\end{itemize}
\end{enumerate}
\end{itemize}
The positions  where one of the players wins the game are called \emph{ending positions}.
The execution of the rules related to a position of the game
constitutes one \emph{round} of the game. The number of
rounds in a play of the game is called the \emph{length of the play}.
We call the ordinals $\gamma<\tlimbound$ \emph{clock values}
and the ordinal $\tlimbound$ the \emph{clock value bound}.
(We note that only rounds with formulae of type $\mu X\psi$, $\nu X\psi$
and $X$ affect clock values.)

%RR: I omitted the finitely bounded case for now..
%
%If $\tlimbound=\omega$, we say that the evaluation game is \emph{finitely bounded}.
%
%In this special case only natural numbers may be announced as time limits.
%
\end{definition}

We observe that in evaluation games we do not need assignments $s$.
%and the label symbols $\X\in\Lbl$ are not given values as the sets of possible worlds.
A label symbol in $X\in\Lambda$ is simply a marker that
points to a node (that node being the formula $\rf(X)$) in the syntax tree of the sentence $\varphi_0$. Hence label symbols are conceptually quite different in \GTS and compositional semantics.
Indeed, the operators $\mu X$ (respectively $\nu X$) can be
given a natural reading relating to \emph{self-reference}. In the
formula $\mu X\psi$, the operator $\mu X$ is \emph{naming} the
formula $\psi$ with the name $X$.
The atoms $X$ inside $\psi$ are, in turn,
\emph{claiming} that $\psi$ holds, i.e., referring back to
the formula $\psi$.
The difference between $\mu$ and $\nu$ is that $\mu X \psi$
relates to \emph{verifying} the
formula $\psi$ while $\nu X \psi$ is
associated with preventing the
falsification of $\psi$, i.e., \emph{defending} $\psi$.
Therefore, if $N(\psi)$ denotes a
natural language reading of $\psi$, then the natural language 
reading of $\mu X\psi$ states that ``\emph{we can verify the claim
named $X$ which asserts that $N(\psi)$}''. An analogous
reading can be given to $\nu X\psi$. This scheme of reading  
recursive formulae
via self-reference is from \cite{turingcomplete}, \cite{final}.

\begin{example}\label{ex: Game}
Consider the Kripke model $\mathcal{M}^*=(W,R,V)$, where we have $W=\{w_i\mid i\in\mathbb{N}\}$, $R=\{(w_0,w_i)\mid i\geq 1\}\cup\{(w_{i+1},w_i)\mid i\geq 0\}$ and $V(p)=\{w_0\}$.

%\medskip

\begin{center}
\begin{tikzpicture}[
	scale=1.5,
	state/.style={draw, circle, rounded corners, font=\small}
	]
	\node at (0,1.3) {$\mathcal{M}^*$:};
	\node at (3,1.3) [state] (0) {$p$};
	\node at (1,0) [state] (1) {\phantom{$p$}};
	\node at (2,0) [state] (2) {\phantom{$p$}};
	\node at (3,0) [state] (3) {\phantom{$p$}};
	\node at (4,0) [state] (4) {\phantom{$p$}};
	\node at (5,0) (5) {};
	\node at (5.5,0) {\dots};
	\node at (5,0.7) {\dots};
	\draw[-latex] (0) to (1);
	\draw[-latex] (0) to (2);
	\draw[-latex] (0) to (3);
	\draw[-latex] (0) to (4);
    \draw[-latex] (0) to (4.8,0.3);
	\draw[-latex] (2) to (1);
	\draw[-latex] (3) to (2);	
	\draw[-latex] (4) to (3);
	\draw[-latex] (5) to (4);
	\draw[-latex, bend left] (1) to (0);
	\node[right=3pt of 0] {$w_0$};
	\node[below=3pt of 1] {$w_1$};
	\node[below=3pt of 2] {$w_2$};
	\node[below=3pt of 3] {$w_3$};
	\node[below=3pt of 4] {$w_4$};
\end{tikzpicture}
\end{center}
Recall the sentence $\varphi^*=\nu X\Box\mu Y(\Diamond Y \vee (p\wedge X))$ from Example~\ref{ex: Sentence} and consider the evaluation game $\mathcal{G}^*=(\mathcal{M}^*,w_0,\varphi^*,\omega)$.
In $\mathcal{G}^*$, Abelard first announces a clock value $n<\omega$ for $\rf(X)$ and then makes a jump from the intial state $w_0$ (with a $\Box$-move). 
Next Eloise announces some clock value $m<\omega$ for $\rf(Y)$. Then
she can, by repeated $\vee$-moves, 
jump in the model (making a $\Diamond$-move) and loop back to the
formula $\rf(Y)$; each time she loops back, she
needs to lower the value of $m$. If Eloise at 
some point chooses the right disjunct, Abelard can
either check if $p$ true in the current state or
loop back to $\rf(X)$. In the latter
case, the value of $n$ is lowered, but the
value of $m$ is reset
back to $\omega$ (allowing Eloise to
choose a fresh value $m$ next time).

The game eventually ends when (1) the clock value of $\rf(X)$ goes to zero, whence Abelard loses; when (2) the clock value of $\rf(Y)$ goes to zero, whence Eloise loses; or when (3) Abelard chooses the left conjunct, whence Eloise wins if and only if $p$ is true at the current state.
%
%\anote{We should probably add are remarks that we shall return to this.}
%
We will return to this game in Example~\ref{ex: Strategy}.
\end{example}

\begin{proposition}\label{the: finite games}
Let $\mathcal{G}=(\mathcal{M},w,\varphi,\tlimbound)$ be a bounded evaluation game. Every play of $\mathcal{G}$ ends in a finite
number of rounds.
\end{proposition}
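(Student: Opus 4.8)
The plan is to attach to every non-ending position a value in a well-ordered set so that each round strictly decreases this value; since a well-order admits no infinite strictly descending sequence, no play can last infinitely many rounds. First I would record the structural fact driving everything: viewing a play as a walk of a ``token'' on the finite syntax tree $\subf(\varphi_0)$, every move except the unfolding of a label symbol (the rules for $\vee,\wedge,\Diamond,\Box$ and for entering $\mu X\psi$ or $\nu X\psi$) passes from a node to a strictly smaller subformula, i.e.\ strictly descends the tree; only a label-unfolding round at a position $(w,X,c)$ jumps upward, namely to the body $\psi$ of $\rf(X)$.

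For the measure, fix an enumeration $\theta_1,\dots,\theta_k$ of the finite set $\subf_{\mu\nu}(\varphi_0)$ that refines the subformula order, so that a proper superformula always precedes its subformulae (outer binders are listed first). To a position $(w,\varphi,c)$ I assign the pair $\bigl((c(\theta_1),\dots,c(\theta_k)),\,|\varphi|\bigr)$, ordered lexicographically, where $|\varphi|$ is the number of nodes in the syntax tree of $\varphi$; the clock tuple is the dominant coordinate and the size $|\varphi|$ breaks ties. These values lie in the lexicographic product of $k+1$ well-orders---the ordinal $\tlimbound+1=\{\gamma\mid\gamma\le\tlimbound\}$ taken $k$ times, together with $\mathbb{N}$---which is itself a well-order.

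Next I would check that each continuing round strictly lowers this value. On a $\vee,\wedge,\Diamond$ or $\Box$ move the clock mapping is untouched while the formula shrinks, so the tie-breaking coordinate drops. On a label-unfolding at $X$ with $\rf(X)=\theta_m$ (the case $\gamma=0$ being an ending position), the clock $c(\theta_m)$ is replaced by a strictly smaller ordinal $\gamma'<\gamma$, and the only other clocks that change are those of $\mu\nu$-subformulae of $\psi$, all of which are proper subformulae of $\theta_m$ and hence sit at coordinates $>m$; thus every coordinate before $m$ is unchanged and coordinate $m$ decreases, so the tuple drops lexicographically. The delicate case is entering a fixpoint node $\mu X\psi$ (or $\nu X\psi$), where a player resets $c(\mu X\psi)$ to some $\gamma<\tlimbound$: here I must argue that the clock stood at $\tlimbound$ immediately before, so that this too is a decrease. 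This is the key subclaim, and I expect it to be the main obstacle. It follows from the upward-jump fact above: the token can reach the operator node $\mu X\psi$ only along the initial descent from the root or along a descent that begins at the target of some label-unfolding, and such a target is the body of an enclosing binder whose subtree contains $\mu X\psi$; that very unfolding resets $c(\mu X\psi)$ to $\tlimbound$, and the ensuing pure descent (through $\vee,\wedge,\Diamond,\Box$ and other fixpoint-entering nodes, never through a label leaf) leaves this particular clock untouched. Hence $c(\mu X\psi)=\tlimbound$ on arrival, and the entering round strictly decreases the tuple as well.

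Finally, combining the three cases, the assigned value strictly decreases at every round of any play. Since it ranges over a well-ordered set, no play can produce an infinite strictly descending chain, so every play reaches an ending position after finitely many rounds, which is the claim.
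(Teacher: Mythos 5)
Your proof is correct and follows essentially the same route as the paper's: both arguments rest on a lexicographic well-order on tuples of clock values in which outer $\mu\nu$-binders dominate inner ones, so that a label-unfolding strictly decreases the coordinate of the reference formula while only resetting later (inner) coordinates, and well-foundedness then bounds the number of rounds. Your single linearized measure with the $|\varphi|$ tie-breaker, and your explicit check that the clock of a fixpoint node equals $\tlimbound$ whenever that node is entered, merely spell out steps that the paper's per-branch-tuple version compresses into ``it is easy to see''.
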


\begin{proof}
For each positive integer $k$, 
let $\prec_k$ denote the ``canonical lexicographic order'' of
$k$-tuples of ordinals. 
That is, $(\gamma_1,\dots,\gamma_k)\prec_k (\gamma_1',\dots,\gamma_k')$ if and only if
there exists some $i\leq k$ such that $\gamma_i<\gamma_i'$
and $\gamma_j = \gamma_j'$ for all $j<i$.
Consider a branch in the syntax tree of $\varphi$. 
Let $\psi_1,\dots ,\psi_k\in\subf_{\mu\nu}(\varphi)$ be the $\mu\nu$-formulae occurring on this branch in this order 
(starting from the root).
In each round of the game, each such sequence $(\psi_1,\dots ,\psi_k)$
is associated with the $k$-tuple $(c(\psi_1),\dots,c(\psi_k))$ of clock values 
(that are ordinals less or equal to $\Gamma$).
It is easy to see that if $c$ and $c'$ are clock mappings such that $c'$
occurs later than $c$ in
the game, then we have $(c'(\psi_1),\dots,c'(\psi_k))\preceq_k(c(\psi_1),\dots,c(\psi_k))$.
Also, every time a transition from some label $X$ to 
the reference formula $\rf(X)$ is made, there is at least one branch
where the $k$-tuple (for the relevant $k$) of
clock values becomes strictly lowered (in
relation to $\prec_k$).
%then 
%$(c'(\chi_1),\dots,c'(\chi_m))\prec_k(c(\chi_1),\dots,c(\chi_m))$
%will hold on some bfor the clocks $c$ and $c'$ (obtained before 
As ordinals are well-founded, it is thus clear
that the game must end after finitely many rounds.
%
%that the game will always end after a finite number of rounds.
\end{proof}

Each evaluation game $\mathcal{G}$ can naturally be associated with a \emph{game tree}
$T(\mathcal{G})=(P_\mathcal{G},E_\mathcal{G})$,
where $P_\mathcal{G}$ is the set of positions $(v,\psi,c)$ of $\mathcal{G}$ 
and $E_\mathcal{G}$ is the successor position relation.
$T(\mathcal{G})$ is formed by
beginning from the initial position and adding
transitions to all possible successor positions.
This procedure is then repeated from the
successor positions until an ending
position is reached. In the game tree, the initial
position is of course the root and ending
positions are leafs. Complete branches correspond to possible plays of the game. 
Due to Proposition~\ref{the: finite games}, the game tree of any
bounded evaluation game is well-founded, i.e., it does not
contain infinite branches. However, if the clock value bound $\tlimbound$ or the model $\mathcal{M}$ is infinite,
then the out-degree of some of the nodes of the
game tree can be infinite.
%
%
%

\begin{comment}
%
\noindent
TO BE ADDED: Remark on the point that the formulae of $\mu$-calculus can be seen as standard formulae of modal logic that are not well-founded. Note that (by using the approach of bounded \GTS) these formulae could be presented in such a form that is well-founded, but contains infinite disjunctions and conjunctions of width $\tlimbound$. (However, in order to verify/falsify a such formula by using \GTS, it suffices to play only a single finite branch in the syntax tree of such formula.)
%
\end{comment}
%

\subsection{Game-Theoretic Semantics}
%\subsection{Strategies for evaluation games}

%\rnote{Shortened the definition below:}

\begin{definition}
Let $\mathcal{G}=(\mathcal{M},w_0,\varphi_0,\tlimbound)$ be an evaluation game. A \emph{strategy} $\sigma$ for Eloise in $\mathcal{G}$ is a partial mapping on the set of those positions $(w,\varphi,c)$ of the game where Eloise needs to make a move such that: 
$\sigma(w,\psi\vee\theta,c)\in\{\psi,\theta\}$,
$\sigma(w,\Diamond\psi,c)\in\{v\in W \mid wRv\}$,
$\sigma(w,\mu\X\psi,c)\in\{\gamma \mid \gamma<\tlimbound\}$,
and $\sigma(w,X,c)\in\{\gamma \mid \gamma<c(\rf(\X))\}$
where $\rf(\X)$ is of the form $\mu\X\psi$.
%
%
%
%\begin{itemize}
%\item $\sigma(w,\psi\vee\theta,c)\in\{\psi,\theta\}$,
%
%\item $\sigma(w,\Diamond\psi,c)\in\{v\in W \mid wRv\}$,
%
%\item $\sigma(w,\mu\X\psi,c)\in\{\gamma \mid \gamma<\tlimbound\}$,
%
%\item $\sigma(w,X,c)\in\{\gamma \mid \gamma<c(\rf(\X))\}$ when $\rf(\X)$ is of the form $\mu\X\psi$.
%
%\item In the remaining cases, $\sigma(w,\varphi,c)$ is left undefined.
%\end{itemize}
%
%
We say that Eloise \emph{plays according to} $\sigma$ if she makes all her choices according to %instructions given by 
$\sigma$
%
%(and $\sigma$ gives instructions for every position where Eloise needs to make a choice).
%
%We say that
%
and that $\sigma$ is a \emph{winning strategy} if Eloise always wins when playing according to $\sigma$. 
\end{definition}

%The strategy we defined above is positional the sense that only the current position of the game can be used when making choices. Note, however, that clock values are included in the notion of position. It is also worth noting that we could define strategies for Abelard in a dual way.

%We are now ready to define a game-theoretic semantics for the $\mu$-calculus.

%
\begin{comment}
%
\begin{remark}
We can show that evaluation games are determined even with positional strategies. If the time limit bound $\tlimbound$ gamma is sufficiently large, then either of the players also has a winning strategy that does not depend on the clock values. But there cases when neither of the players has a winning strategy that is independent of the clock values.
\end{remark}
%
\end{comment}
%

%\subsection{Game-theoretic semantics}

\begin{definition}
Let $\mathcal{M}=(W,R,V)$ be a
model, $w\in W$, $\varphi$ a sentence and $\tlimbound>0$ an ordinal. 
We define truth of $\varphi$ in $\mathcal{M}$
and $w$ according to \emph{$\tlimbound$-bounded game theoretic semantics}, $\mathcal{M},w\Vdash^\tlimbound\!\varphi$, as follows:
\[
	\mathcal{M},w\Vdash^\tlimbound\!\varphi 
	\text{\; iff \; Eloise has a winning strategy in } (\mathcal{M},w,\varphi,\tlimbound).
\]
\end{definition}

\begin{example}\label{ex: Strategy}
Recall the game $\mathcal{G}^*$ from Example~\ref{ex: Game}. 
We define a strategy for Eloise as follows. After Abelard has made a transition to some state $w_j$, Eloise chooses $j$ for the clock value of $\rf(Y)$ and jumps in the model until reaching again $w_0$. She then chooses the right disjunct at $w_0$, whence she either wins (since $w_0\in V(p)$) or Abelard needs to lower the clock value of $\rf(X)$ and the clock value of $\rf(Y)$ gets reset back to $\omega$.
Clearly this is a winning strategy for Eloise and thus $\mathcal{M}^*,w_0\Vdash^\omega\!\varphi^*$.

From the structure of the evaluation games for $\varphi^*$ we find an interpretation for the meaning of $\varphi^*$: ``we can infinitely repeat the process where first (1) an arbitrary transition is made, and then (2) we can reach a state where $p$ is true and the process can be continued from (1)''.
Hence the clock value chosen for $\rf(Y)$ is
intuitively a ``commitment'' on
\emph{how many rounds at
most it will take to reach a state where $p$ holds}.
The clock value for $\rf(X)$, on the
other hand, is a ``challenge'' on
\emph{how many times $p$ must be reached}.
Indeed, in models where $p$ can be
reached only finitely many---say $n$---times
from the initial state, Abelard can win by
choosing $n+1$ as the
initial clock value for $\rf(X)$.

%\anote{Yes but we could consider having as our formula the
%standard ``can be reached infinitely many times'' formula. Now it is a
%variant where there are in general two jumps by Abelard between every
%attempt of reaching p. And we could actually describe the meaning of
%the example formula somewhere in natural language. Perhaps here.}

%\rnote{Removed one $\Box$ from the formula, relocated $p$, modified all the examples accordingly and tried to add a natural language description for the sentence $\varphi^*$.}
\end{example}

\section{Bounded Compositional Semantics}

In this section we define a compositinal semantics based on 
\emph{ordinal approximants} of fixed point operators.
Let $\mathcal{M}=(W,R,V)$ be a Kripke model,  $F:\mathcal{P}(W)\rightarrow\mathcal{P}(W)$ an operator and $\gamma$ an ordinal. We define the sets $F_\mu^\gamma$ and $F_\nu^\gamma$  recursively as follows:
\begin{align*}
	&F_\mu^0 := \emptyset 
	    &\text{and} \qquad\quad &F_\nu^0 := W. \\
	&F_\mu^\gamma := F\bigl(F_\mu^{\gamma-1}\bigr)
    	&\text{and} \qquad\quad  &F_\nu^\gamma := F\bigl(F_\nu^{\gamma-1}\bigr),
		\;\text{ if $\gamma$ is a successor ordinal}. \\
	&F_\mu^\gamma := \bigcup_{\delta<\gamma}F_\mu^\delta
        &\text{and} \qquad\quad  &F_\nu^\gamma := \bigcap_{\delta<\gamma}F_\nu^\delta,
        \;\quad\text{ if $\gamma$ is a limit ordinal}.
\end{align*}

\begin{definition}
Consider a model $\mathcal{M}$ with a state $w$ and a
related assignment $s$.
We obtain \emph{$\tlimbound$-bounded
compositional semantics}
for the modal $\mu$-calculus by
defining truth of $p$, $\neg p$, $\vee$, $\wedge$, $\Diamond$, $\Box$ and $\X$ recursively as in 
the standard compositional semantics and
treating the $\mu$ and $\nu$-operators as follows:
\begin{itemize}
\item $\mathcal{M},w\vDash^\tlimbound_s\mu\X\psi$ \;iff\; $w\in(\opM{\psi}_{\X,s,\tlimbound})_\mu^\tlimbound$,
\item $\mathcal{M},w\vDash^\tlimbound_s\nu\X\psi$ \;iff\; $w\in(\opM{\psi}_{\X,s,\tlimbound})_\nu^\tlimbound$,
\end{itemize}
where the operator $\op{\varphi}_{\X,s,\tlimbound}:\mathcal{P}(W)\rightarrow\mathcal{P}(W)$ is defined such that
\begin{align*}
	\op{\varphi}_{\X,s,\tlimbound}(A) = \{w\in W\mid \mathcal{M},w\vDash^\tlimbound_{s[A/\X]}\varphi\}.
\end{align*}
\end{definition}

%
\begin{comment}
%
\noindent
It is important to note that $(\mu F_{\psi,s,\X})^\tlimbound$ (or similarly $(\nu F_{\psi,s,\X})^\tlimbound$)  is not necessarily a fixed point for the operator $F_{\psi,s,\X}$ since $\tlimbound$ may be arbitrary (and even finite) ordinal. We will examine in the next section, when it is guaranteed that the fixed point is reached in this (possibly transfinite) iteration.

\medskip
\end{comment}
%

\noindent
The semantics of the $\mu$ and $\nu$-operators can be equivalently given as follows:
\begin{itemize}
\item $\mathcal{M},w\vDash^\tlimbound_s\mu\X\psi$ \;iff\; there exists some $\gamma<\tlimbound$ s.t. $w\in(\opM{\psi}_{\X,s,\tlimbound})_\mu^{\gamma+1}$.
\item $\mathcal{M},w\vDash^\tlimbound_s\nu\X\psi$ \;iff\; $w\in(\opM{\psi}_{\X,s,\tlimbound})_\nu^{\gamma+1}$ for every $\gamma<\tlimbound$.
\end{itemize}
If $\tlimbound$ is a limit ordinal, we can replace the superscripts $\gamma+1$ above by $\gamma$.

\begin{comment}
In the special case where $\tlimbound=\omega$, we obtain the \emph{finitely bounded compositional semantics} which has the following truth conditions:

\begin{itemize}
\item $\mathcal{M},w\vDash^\omega_s\mu\X\psi$ iff there
%
exists some $n\in\mathbb{N}$ s.t. $w\in(\opM{\psi}_{\X,s,\omega})_\mu^{n}$.
%
\item $\mathcal{M},w\vDash^\omega_s\nu\X\psi$ iff $w\in(\opM{\psi}_{\X,s,\omega})_\nu^{n}$ for every $n\in\mathbb{N}$.
\end{itemize}
\end{comment}

%
\begin{comment}
%
\begin{definition}
Let $\varphi$ be a formula of $\mu$-calculus. We define a formula $\varphi'$ recursively:
\begin{itemize}
\item $p'=p$, $(\neg p)'=\neg p$, $\X'=\X$.
\item $(\psi\vee\theta)'=\psi'\vee\theta''$ and $(\psi\wedge\theta)'=\psi'\wedge\theta''$, where $\theta''$ is obtained from $\theta'$ by replacing all bound label symbols with fresh label symbols in $\Lbl$.
\item $(\mu\X\psi)'=\mu\Y\psi''$ and $(\nu\X\psi)'=\nu\Y\psi''$ where $\Y\in\Lbl$ is a fresh label symbol and $\psi''$ is obtained from $\psi'$ by replacing all free occurrences of $\X$ with $\Y$.
\end{itemize}
The formula $\varphi'$ is called a \emph{normal form} of $\varphi$.
\end{definition}
%
\end{comment}
%

\medskip

We say that a formula is in \emph{normal form} if 
each label symbol in $\Lambda$ occurs in the 
formula at most once in the $\mu$-$\nu$-operators
(but may occur several times on the atomic level).
We let $\varphi'$ denote a normal form variant of $\varphi$
obtained simply by renaming label symbols where
appropriate.
It is easy to show that $\varphi$ is equivalent to $\varphi'$ with respect to both $\tlimbound$-bounded compositional semantics ($\vDash^\tlimbound$) and $\tlimbound$-bounded \GTS ($\Vdash^\tlimbound$). Therefore, when proving the equivalence of these two semantics, it suffices that we consider sentences that are in normal form.
Indeed, henceforth we assume that all formulae are in this normal form.

%The following lemma is easy to prove.

\begin{comment}
\begin{lemma}
Let $\varphi$ be a sentence of the $\mu$-calculus and let $\varphi'$ be its 
%
variant in normal form. We now have:
\vspace{-0,2cm}
\begin{align*}
	&\mathcal{M},w\vDash^\tlimbound\!\varphi \;\text{ iff }\; \mathcal{M},w\vDash^\tlimbound\!\varphi' 
	\qquad \text{ and } \qquad
	\mathcal{M},w\Vdash^\tlimbound\!\varphi \;\text{ iff }\; \mathcal{M},w\Vdash^\tlimbound\!\varphi'.
\end{align*}
\end{lemma}
%

By this lemma it suffices to consider only formulae in normal form
%
when proving the following theorem which
%
establishes that the $\tlimbound$-bounded \GTS is equivalent to the $\tlimbound$-bounded
%
compositional semantics.
\end{comment}

%In the following main theorem of this paper we prove the equivalence of $\tlimbound$-bounded game-theoretic semantics and $\tlimbound$-bounded compositional semantics for any clock value bound $\tlimbound>0$.

\begin{theorem}\label{the: Equivalence of semantics}
Let $\tlimbound$ be an ordinal, $\mathcal{M}$ a Kripke model, $w_0\in W$ and $\varphi_0$ a sentence of the modal $\mu$-calculus. Now we have
\[
	\mathcal{M},w_0\vDash^\tlimbound\!\varphi_0 \;\text{ iff }\; \mathcal{M},w_0\Vdash^\tlimbound\!\varphi_0.
\]
\end{theorem}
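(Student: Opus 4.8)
The plan is to prove the equivalence by induction, but the right induction is not on the structure of $\varphi_0$ alone (the fixed-point operators break a naive structural induction, since $\psi$ in $\mu X\psi$ is not a sentence). Instead I would prove a strengthened statement that talks about arbitrary subformulae together with their clock mappings, and I would induct simultaneously on the well-founded game tree / on the ordinals attached to the fixed points. Concretely, for a position $(w,\psi,c)$ of the game $\mathcal{G}=(\mathcal{M},w_0,\varphi_0,\tlimbound)$ I would associate to the clock mapping $c$ a corresponding assignment $s_c$ sending each label $X$ with $\rf(X)=\mu X\chi$ (resp.\ $\nu X\chi$) to the approximant $(\opM{\chi}_{X,s_c,\tlimbound})_\mu^{c(\rf(X))}$ (resp.\ the $\nu$-approximant), and then prove
\[
	\text{Eloise has a winning strategy from } (w,\psi,c)
	\quad\text{iff}\quad
	\mathcal{M},w\vDash^\tlimbound_{s_c}\psi.
\]
Taking $\psi=\varphi_0$, $w=w_0$ and the initial clock mapping $c_0\equiv\tlimbound$ (so that $s_{c_0}$ is irrelevant since $\varphi_0$ is a sentence) then yields the theorem.

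\textbf{Key steps.}
First I would set up the correspondence between clock mappings and approximant-valued assignments carefully, checking that the normal-form assumption (each label bound at most once) makes $s_c$ well defined. Second, I would run well-founded induction on the game tree, which is legitimate by Proposition~\ref{the: finite games}. The Boolean and modal cases ($p$, $\neg p$, $\vee$, $\wedge$, $\Diamond$, $\Box$) are routine: the game rule and the compositional clause match step-for-step, and the assignment $s_c$ is unchanged, so the induction hypothesis applies directly. The substantive cases are the fixed-point operators and the label symbols. For a position $(w,\mu X\psi,c)$, Eloise picks $\gamma<\tlimbound$ and moves to $(w,\psi,c[\gamma/\mu X\psi])$; on the semantic side, $\mathcal{M},w\vDash^\tlimbound_{s_c}\mu X\psi$ iff $w\in(\opM{\psi})_\mu^{\gamma+1}$ for some $\gamma<\tlimbound$, i.e.\ iff $w$ lies in $\opM{\psi}$ applied to the $\gamma$-approximant. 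The plan is to show these two existential quantifications over $\gamma$ coincide by unfolding one approximation step and invoking the induction hypothesis at $(w,\psi,\cdot)$, with the label-symbol rule handling the unfolding. For the label case $(w,X,c)$ with $\rf(X)=\mu X\psi$ and $c(\rf(X))=\gamma$: if $\gamma=0$ Abelard wins, matching $F_\mu^0=\emptyset$; if $\gamma>0$ Eloise must lower to some $\gamma'<\gamma$ and the clocks of inner $\mu\nu$-formulae reset to $\tlimbound$, which must be shown to correspond exactly to evaluating $\opM{\psi}$ at the $\gamma$-approximant, i.e.\ $w\in F_\mu^\gamma=F(F_\mu^{\gamma-1})$ at successor stages and $w\in\bigcup_{\delta<\gamma}F_\mu^\delta$ at limit stages. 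The dual $\nu$/Abelard cases are symmetric.

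\textbf{Main obstacle.}
The hard part is the bookkeeping for the \emph{resetting} of inner clock values in the label rule, and making the limit-ordinal case of the approximant recursion line up with the players' freedom to choose an arbitrary $\gamma'<\gamma$. One must verify that resetting $c'(\theta)=\tlimbound$ for every $\theta\in\subf_{\mu\nu}(\varphi_0)$ with $\theta\in\subf(\psi)$ is precisely what is needed so that the approximant attached to an inner fixed point is recomputed afresh at full height $\tlimbound$ each time the outer recursion unfolds — this is what makes $s_{c'}$ the assignment appropriate for one unfolding of $\opM{\psi}$ at the relevant approximant stage. The delicate point is that at a limit ordinal $\gamma$, membership $w\in F_\mu^\gamma=\bigcup_{\delta<\gamma}F_\mu^\delta$ corresponds to Eloise being able to pick some $\gamma'<\gamma$ landing her in $F_\mu^{\gamma'+1}$ — but only after one application of $F$, so I would have to be careful that the ``$+1$'' in the alternative formulation of the semantics is consumed by exactly one traversal from $X$ back through $\rf(X)$ and around to $X$ again. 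I expect the induction to go through once the invariant relating $c$ to the stage of each approximant is stated precisely and maintained across the reset, but pinning down that invariant (especially interleaving successor and limit stages for nested fixed points of alternating type) is where the real work lies.
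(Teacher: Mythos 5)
Your proposal is correct and takes essentially the same approach as the paper: your invariant identifying the clock value $c(\rf(X))$ with the stage of the corresponding approximant via the assignment $s_c$ is precisely the paper's condition $(\star)$, and both arguments rest on well-founded induction over the game tree guaranteed by Proposition~\ref{the: finite games}. The paper merely packages the two directions slightly differently (forward maintenance of the invariant to build Eloise's strategy, backwards well-founded induction for the converse), which matches the content of your unified biconditional.
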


\begin{proof}(Sketch.) 
We present here a proof sketch
highlighting the main ideas. For a rigorous, fully detailed proof,
please see the appendix of the full arXiv preprint
version \cite{techrepmu2} of the current submission.

The key in both directions 
of the proof
is the following condition $(\star)$ which is a property
satisfied/unsatisfied by positions $(w,\varphi,c)$ in the evaluation game $\mathcal{G}=(\mathcal{M},w_0,\varphi_0,\tlimbound)$:
\begin{itemize}
\item[($\star$)]
There is an assignment $s$ such that $\mathcal{M},w\vDash_s^{\Gamma}\varphi$, and
for each $\X\in\subf(\varphi_0)$:
\begin{enumerate}
\item
$s(X) = (\opM{\psi}_{\X,s,\tlimbound})_\mu^{\gamma}$ \,if\, 
$\rf(\X) = \mu\X\psi$
and $c(\rf(\X)) = \gamma$,
\item
$s(X) = (\opM{\psi}_{\X,s,\tlimbound})_\nu^{\gamma}$ \,if\,
$\rf(\X) = \nu\X\psi$
and $c(\rf(\X)) = \gamma$.
\end{enumerate}
\end{itemize}
Note that this condition essentially relates the clock values $\gamma$ of bounded \GTS to $\gamma$-approximants in the bounded compositional semantics.

Proving the left to right implication, we first note that $(\star)$ holds in the initial position of $\mathcal{G}$ by the assumption $\mathcal{M},w_0\vDash^\tlimbound\!\varphi_0$. 
Then we show that whenever $(\star)$ holds for the current position, Eloise either wins the game in the current position or she can
maintain $(\star)$ to the next position. By maintaining $(\star)$, we obtain a winning strategy since $\mathcal{G}$ ends in a finite number of rounds.

For the other direction of the equivalence, we suppose that Eloise has a winning strategy $\sigma$ in $\mathcal{G}$. Since the game tree of $\mathcal{G}$ is well-founded, we can use well-founded (backwards) induction on the positions in the tree to prove that: if a position $(w,\varphi,c)$ can be reached with $\sigma$, then ($\star$) holds for $(w,\varphi,c)$.
Hence, in particular, $(\star)$ holds in the initial position of $\mathcal{G}$ and thus $\mathcal{M},w_0\vDash^\tlimbound\!\varphi_0$.
%
%This concludes the sketch of the proof. For all details, see the appendix.
\end{proof}

%\begin{corollary}
%Bounded evaluation games for the $\mu$-calculus are positionally determined.
%\end{corollary}

%From Theorem~\ref{the: Equivalence of semantics} it
%also follows that bounded evaluation
%games are \emph{positionally determined}.
%\anote{Ultimately yes, but it requires that
%compositional semantics has
%classical negation. We have only normal form formulae so we
%have no formally proper negation.
%Furthermore, determinacy follows
%also due to game tree being well-founded.}
%Note that positional strategies
%here can depend on the clock
%values since clock mappings are encoded in positions.

Let $\mathcal{M}$ be a model. It is well-known that over  $\mathcal{M}$, each operator related to a formula of the $\mu$-calculus reaches a fixed point in at most $(\card(\mathcal{M}))^+$ iterations, where $(\card(\mathcal{M}))^+$ is the 
successor \emph{cardinal} of $\card(\mathcal{M})$. Thus it is
easy to see that the standard compositional semantics and $(\card(\mathcal{M}))^+$-bounded compositional semantics are
equivalent in $\mathcal{M}$. Hence we obtain the following corollary:

\begin{corollary}\label{the: bounded vs standard semantics}
$\tlimbound$-bounded $\GTS$ is equivalent with the standard compositional semantics of the modal $\mu$-calculus when $\tlimbound\geq (\card(\mathcal{M}))^+$.
\end{corollary}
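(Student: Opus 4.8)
The plan is to factor the statement through Theorem~\ref{the: Equivalence of semantics} and then compare the two compositional semantics directly. By that theorem, $\mathcal{M},w_0\Vdash^\tlimbound\!\varphi_0$ is equivalent to $\mathcal{M},w_0\vDash^\tlimbound\!\varphi_0$, so it suffices to show that the $\tlimbound$-bounded compositional semantics coincides with the standard compositional semantics over $\mathcal{M}$ whenever $\tlimbound\geq(\card(\mathcal{M}))^+$. Concretely, I would prove the sharper claim that for every subformula $\varphi$ of $\varphi_0$, every assignment $s$, and every $w\in W$ we have $\mathcal{M},w\vDash_s^\tlimbound\varphi$ iff $\mathcal{M},w\vDash_s\varphi$, and then specialise to $\varphi=\varphi_0$, whose truth is independent of $s$ since $\varphi_0$ is a sentence.

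The argument proceeds by induction on the structure of $\varphi$. The cases $p$, $\neg p$, $\X$ are immediate, since the truth conditions for these formulae never mention $\tlimbound$, and the Boolean and modal cases $\psi\vee\theta$, $\psi\wedge\theta$, $\Diamond\psi$, $\Box\psi$ follow directly from the induction hypothesis, as both semantics treat these connectives identically and refer only to the (inductively handled) truth of the immediate subformulae. The substantive cases are $\mu\X\psi$ and $\nu\X\psi$. Here the key observation is that the induction hypothesis, applied to the proper subformula $\psi$ for all assignments of the form $s[A/\X]$ with $A\subseteq W$, yields that the bounded operator $\opM{\psi}_{\X,s,\tlimbound}$ and the standard operator $\opM{\psi}_{\X,s}$ are one and the same operator on $\mathcal{P}(W)$, since they agree on every argument $A$.

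Once the operators are identified, I would invoke the well-known fact recalled just before the corollary: a monotone operator $F$ on $\mathcal{P}(W)$ reaches its least fixed point along the increasing approximant chain $F_\mu^\gamma$, and its greatest fixed point along the decreasing chain $F_\nu^\gamma$, in at most $(\card(\mathcal{M}))^+$ steps, after which the chains are constant. Thus for $\tlimbound\geq(\card(\mathcal{M}))^+$ we have $(\opM{\psi}_{\X,s,\tlimbound})_\mu^\tlimbound=(\opM{\psi}_{\X,s})_\mu^{(\card(\mathcal{M}))^+}$, which is exactly the least fixed point of $\opM{\psi}_{\X,s}$; by the definition of the standard semantics this set is precisely $\{w\mid \mathcal{M},w\vDash_s\mu\X\psi\}$, giving the desired equivalence. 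The case of $\nu\X\psi$ is dual, using the greatest fixed point and the decreasing chain $F_\nu^\gamma$.

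The step I expect to be the main obstacle is the identification of the bounded and standard operators in the fixed-point cases, because the bounded operator $\opM{\psi}_{\X,s,\tlimbound}$ carries the bound $\tlimbound$ into the evaluation of every inner fixed point occurring in $\psi$, so one cannot simply quote fixed-point stabilisation for the outer operator in isolation. Structuring the argument as an induction on subformula complexity is exactly what resolves this: the inner fixed points, being governed by proper subformulae, are already reconciled with the standard semantics by the induction hypothesis, so by the time we treat the outer operator it genuinely coincides with its standard counterpart and the classical stabilisation bound applies verbatim.
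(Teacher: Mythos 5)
Your proposal is correct and follows essentially the same route as the paper: the paper also derives the corollary by combining Theorem~\ref{the: Equivalence of semantics} with the observation that each operator stabilises within $(\card(\mathcal{M}))^+$ iterations, so that the $\tlimbound$-bounded and standard compositional semantics coincide. The only difference is that the paper dismisses this last step as ``easy to see,'' whereas you supply the structural induction that justifies identifying the bounded operators with the standard ones before invoking stabilisation --- a worthwhile elaboration, but not a different argument.
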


Also note that, in the special case of \emph{finite models}, it suffices to use finite clock values that are at most the cardinality of the model.

%%%%%%%%%%%%%%%%%%%%%

\section{Finitely Bounded Semantics}

As stated in Corollary~\ref{the: bounded vs standard semantics}, the bounded semantics becomes equivalent with the standard (unbounded) semantics if we set a sufficiently large clock value bound $\tlimbound$. However, using smaller values of $\tlimbound$, we obtain different semantic systems typically nonequivalent to the standard semantics.
We can either set some fixed bound for $\tlimbound$ or use a value that is determined by some parameters---such as the size of the given model and the given formula. In this section we consider the former case; systems
relating to the latter case are examined in Section~\ref{Variants with PTime model checking}.

%\anote{Some larger changes in the wording of the following, it was problematic to say that there is no bound on finite values, as the bound is $\omega$, and we have the usual local bounds $\gamma$ during the game.}

A particularly interesting case with a fixed value of $\tlimbound$ is the so-called \emph{finitely bounded semantics}, where we set $\tlimbound=\omega$ for all evaluation games. In the corresponding \GTS, the players can only announce \emph{finite} clock values.\footnote{Note that the correspondence to for-loops is particularly natural with finitely bounded semantics: iterations can be done up to any finite bound that has to be declared in advance.} 
Finitely bounded semantics will be denoted by \FBS which refers to both game-theoretic and compositional semantics with $\tlimbound=\omega$. 
%
%(recall that these two are equivalent by Theorem~\ref{the: Equivalence of semantics}). 
%\lnote{Recalling Thm 4.2 not necessary here, it is on the prevous page.}
%
In finite models \FBS is equivalent to the standard semantics, but this equivalence breaks over infinite models; see Example~\ref{ex: FBS} below.

In the example and proofs that follow, we will consider the sentence 
\[
    \varphi_{\mathrm{AF}p}:=\mu\X(p\vee\Box X)
\]
which intuitively means that on every path, $p$ can be reached eventually. 
Note that $\varphi_{\mathrm{AF}p}$ corresponds to the sentence $\mathrm{AF}p$ of \emph{Computation tree logic} \CTL.

\begin{example}\label{ex: FBS}
Recall the model $\mathcal{M}^*$ from Example~\ref{ex: Game}.
Let $\mathcal{M}^\dagger$ be the model that is otherwise identical to $\mathcal{M}^*$, but $V(p)=\{w_1\}$.
Since the state $w_1$ is eventually reached on every path starting from $w_0$, it is easy to see that $\mathcal{M}^\dagger,w_0\models \varphi_{\mathrm{AF}p}$. However, $\mathcal{M}^\dagger,w_0\not\models^\omega \varphi_{\mathrm{AF}p}$ since from $w_0$ there is no finite upper bound on how many transitions 
%Lauri
%will it take 
are needed
to reach $w_1$. Indeed, Abelard has a winning strategy in $(\mathcal{M}^\dagger,w_0,\varphi_{\mathrm{AF}p},\omega)$ since he can win by choosing a transition to $w_{j+1}$ for any clock value $j<\omega$ for $\rf(\X)$---chosen by Eloise.

It is worth noting that $\mathcal{M}^\dagger,w_0\models^{\omega+1} \varphi_{\mathrm{AF}p}$ since if Eloise can choose $\omega$ as the initial clock value for $\rf(\X)$ and then lower it to $j-1$ after Abelard has made a transition to a state $w_j$.
Moreover, we also have $\mathcal{M}^\dagger,w_0\models^\omega \Box\varphi_{\mathrm{AF}p}$ since Eloise will know how many transitions it takes to reach $w_1$ as Abelard has to make the first transition before Eloise must announce a clock value.
\end{example}

In the proofs that follow, we will use negations and
implications of formulae of the modal $\mu$-calculus. Such formulae are in
general not included in our official
syntax (in the current paper), but it is straightforward to show
that they can be translated to equivalent formulae in negation normal form.

It is well known that, with standard semantics, the modal $\mu$-calculus has the \emph{finite model property}, i.e., every satisfiable sentence is satisfied in some finite model (see, e.g., \cite{bradfield}). However, with finitely bounded semantics, this property is lost.

\begin{proposition}\label{the: finite model property}
The modal $\mu$-calculus with \FBS does not have the finite model property.
\end{proposition}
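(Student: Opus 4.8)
The plan is to exhibit a single sentence $\chi$ that is satisfiable under \FBS yet has no finite model under \FBS. Since \FBS coincides with the standard semantics on finite models, it suffices to make $\chi$ satisfiable under \FBS over some infinite model while being \emph{unsatisfiable in the standard semantics over all models}. I take
\[
  \chi \;:=\; \neg\varphi_{\mathrm{AF}p}\wedge\Box\varphi_{\mathrm{AF}p},
\]
whose negation-normal form is $\nu X(\neg p\wedge\Diamond X)\wedge\Box\,\mu X(p\vee\Box X)$. This sentence expresses a tension consistent only in the bounded reading: every successor reaches $p$ on all of its paths, yet the current state itself ``escapes'' any uniform finite bound.

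For \FBS-satisfiability I reuse the model $\mathcal{M}^\dagger$ of Example~\ref{ex: FBS}, which already records $\mathcal{M}^\dagger,w_0\models^\omega\Box\varphi_{\mathrm{AF}p}$ and $\mathcal{M}^\dagger,w_0\not\models^\omega\varphi_{\mathrm{AF}p}$. It remains to see $\mathcal{M}^\dagger,w_0\models^\omega\neg\varphi_{\mathrm{AF}p}$. This follows from $\not\models^\omega\varphi_{\mathrm{AF}p}$ together with determinacy of bounded evaluation games (their game trees are well-founded by Proposition~\ref{the: finite games}, so determinacy is obtained by well-founded induction from the leaves). Alternatively, one can avoid determinacy and exhibit Eloise's winning strategy for $\nu X(\neg p\wedge\Diamond X)$ directly: after Abelard announces a finite clock value $\gamma$ for $\rf(X)$, Eloise first moves to $w_{\gamma+2}$ and then descends along the forced path $w_{\gamma+2},w_{\gamma+1},\dots$; each round lowers both the state index and the clock by at least one, and since the clock reaches $0$ after at most $\gamma$ rounds while the only $p$-state $w_1$ is reached only after $\gamma+1$ rounds, the clock of the $\nu$-formula expires---so Eloise wins---before the play can touch $w_1$. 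Hence $\mathcal{M}^\dagger,w_0\models^\omega\chi$.

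For unsatisfiability in finite models I use that \FBS and the standard semantics agree on finite models, so it is enough to show that $\chi$ is standard-unsatisfiable over arbitrary models. Under the standard semantics, $\neg\varphi_{\mathrm{AF}p}=\nu X(\neg p\wedge\Diamond X)$ holds at a state $w$ iff there is an infinite $p$-avoiding path starting at $w$; in particular the first successor $v$ on this path again satisfies $\neg\varphi_{\mathrm{AF}p}$, so $v\not\models\varphi_{\mathrm{AF}p}$. On the other hand $\Box\varphi_{\mathrm{AF}p}$ at $w$ forces $v\models\varphi_{\mathrm{AF}p}$ for this genuine successor $v$, a contradiction. Thus no model, in particular no finite model, satisfies $\chi$ in the standard semantics, hence none satisfies it under \FBS. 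Combining the two halves, $\chi$ witnesses the failure of the finite model property.

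The main obstacle is the bookkeeping in the second half: one must pin down, via the greatest-fixed-point (ordinal approximant) reading, that standard truth of the negated formula really yields an \emph{infinite} $p$-avoiding path and therefore a successor falsifying $\varphi_{\mathrm{AF}p}$, and one must be careful that this successor is exactly the one constrained by $\Box\varphi_{\mathrm{AF}p}$. The only other point needing care is the passage from $\not\models^\omega\varphi_{\mathrm{AF}p}$ to $\models^\omega\neg\varphi_{\mathrm{AF}p}$, which is why I record the explicit strategy above as a determinacy-free alternative.
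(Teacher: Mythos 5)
Your proof is correct and follows essentially the same route as the paper: the witness sentence $\Box\varphi_{\mathrm{AF}p}\wedge\neg\varphi_{\mathrm{AF}p}$, unsatisfiable under the standard semantics (hence, by agreement of \FBS with the standard semantics on finite models, \FBS-unsatisfiable on finite models) yet \FBS-satisfiable in the infinite model $\mathcal{M}^\dagger$ of Example~\ref{ex: FBS}. You are in fact slightly more careful than the paper in justifying the step from $\mathcal{M}^\dagger,w_0\not\models^\omega\varphi_{\mathrm{AF}p}$ to $\mathcal{M}^\dagger,w_0\models^\omega\neg\varphi_{\mathrm{AF}p}$ (via determinacy or the explicit strategy for $\nu X(\neg p\wedge\Diamond X)$), a point the paper leaves implicit.
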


\begin{proof}
It is easy to see that $\Box\varphi_{\mathrm{AF}p}\rightarrow \varphi_{\mathrm{AF}p}$ is valid with the standard semantics (this follows from the ``fixpoint property'' $\mathrm{AF}p\leftrightarrow p\vee \mathrm{AXAF}p$ of \CTL). Therefore $\Box\varphi_{\mathrm{AF}p}\wedge\neg \varphi_{\mathrm{AF}p}$ is not satisfiable with the standard semantics. As the standard semantics is equivalent to \FBS in
finite models, $\Box\varphi_{\mathrm{AF}p}\wedge\neg \varphi_{\mathrm{AF}p}$ cannot be satisfied under \FBS in any finite model.
However, $\Box\varphi_{\mathrm{AF}p}\wedge\neg \varphi_{\mathrm{AF}p}$ is satisfiable with \FBS in an infinite model---as demonstrated by the model $\mathcal{M}^\dagger$ in Example~\ref{ex: FBS}.
\end{proof}

Moreover, \FBS has the following interesting connection to the standard semantics.
%in relation to valid sentences.

\begin{proposition}\label{the: validities}
The set of validities of the modal $\mu$-calculus with
\FBS is strictly included in the set of validities with the standard semantics. 
\end{proposition}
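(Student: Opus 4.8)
The statement asserts two things: that every \FBS-validity is a standard validity (the inclusion), and that this inclusion is proper (strictness). The plan is to treat these separately, noting that the strictness half has essentially been prepared already by Example~\ref{ex: FBS}.

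For the inclusion I would argue by contraposition. Suppose $\varphi$ is a sentence that is \emph{not} valid under the standard semantics. Then $\neg\varphi$ is satisfiable under the standard semantics; as observed just before Proposition~\ref{the: finite model property}, $\neg\varphi$ can be rewritten as a genuine $\mu$-calculus sentence in negation normal form, so the finite model property of the standard $\mu$-calculus applies to it. Hence $\neg\varphi$ is satisfied at some state $w'$ of a \emph{finite} model $\mathcal{M}'$, i.e. $\mathcal{M}',w'\not\models\varphi$. Since $\mathcal{M}'$ is finite, \FBS coincides with the standard semantics on it: by Corollary~\ref{the: bounded vs standard semantics}, as $\omega\geq(\card(\mathcal{M}'))^+$ whenever $\mathcal{M}'$ is finite. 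Therefore $\mathcal{M}',w'\not\models^\omega\varphi$, so $\varphi$ is not \FBS-valid. This is exactly the contrapositive of the desired inclusion.

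For strictness I would exhibit a single concrete standard validity that fails under \FBS. The natural witness is $\Box\varphi_{\mathrm{AF}p}\rightarrow\varphi_{\mathrm{AF}p}$, which was already shown to be valid under the standard semantics inside the proof of Proposition~\ref{the: finite model property} (via the fixpoint property of \CTL). By Example~\ref{ex: FBS}, the infinite model $\mathcal{M}^\dagger$ satisfies $\mathcal{M}^\dagger,w_0\models^\omega\Box\varphi_{\mathrm{AF}p}$ while $\mathcal{M}^\dagger,w_0\not\models^\omega\varphi_{\mathrm{AF}p}$. Thus at $w_0$ the antecedent holds and the consequent fails under \FBS, so $\mathcal{M}^\dagger,w_0\not\models^\omega(\Box\varphi_{\mathrm{AF}p}\rightarrow\varphi_{\mathrm{AF}p})$, and the implication is not \FBS-valid. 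This produces a formula lying in the standard validities but outside the \FBS-validities, which together with the inclusion yields the strict containment.

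The only genuinely delicate point is the inclusion direction, and it rests entirely on two imported facts: the finite model property of the standard $\mu$-calculus and the coincidence of \FBS with the standard semantics on finite models. The main obstacle, such as it is, is justifying the passage from ``$\varphi$ is not valid'' to ``$\neg\varphi$ is satisfiable in a finite model''; this is precisely where the negation-normal-form rewriting of $\neg\varphi$ is needed, so that the finite model property applies to a legitimate sentence of the official syntax.
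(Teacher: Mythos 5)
Your proposal is correct and matches the paper's own proof in all essentials: the inclusion is obtained from the finite model property of the standard semantics together with the coincidence of \FBS and the standard semantics on finite models (you state it contrapositively, the paper states it directly, but it is the same argument), and strictness is witnessed by $\Box\varphi_{\mathrm{AF}p}\rightarrow\varphi_{\mathrm{AF}p}$ via the model $\mathcal{M}^\dagger$ exactly as in the paper.
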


\begin{proof}
To prove the inclusion, let $\varphi$ be a sentence valid under \FBS. Then $\neg\varphi$ cannot be satisfied under \FBS in any finite model. Since the standard semantics and \FBS are equivalent in finite models, it follows that $\neg\varphi$ is not satisfied by the standard semantics in any finite model. Due to the finite model property of the standard semantics, $\neg\varphi$ is not satisfied by any model and thus $\varphi$ is valid.
The inclusion is strict as $\Box\varphi_{\mathrm{AF}p}\rightarrow \varphi_{\mathrm{AF}p}$ is valid under standard semantics but not under \FBS (cf. proof of Proposition \ref{the: finite model property}).
\end{proof}

We showed in \cite{time2017}, \cite{tcs2019} that the claims of Propositions \ref{the: finite model property}, \ref{the: validities} hold also for the \FBS defined for \CTL and \ATL.
There we also developed a tableau method for showing that the validity problem of \CTL and \ATL with \FBS is decidable and has the same complexity (ExpTime) as with the standard semantics.
It remains to be investigated whether the analogous ExpTime result holds also for the $\mu$-calculus with \FBS.

%%%%%%%%%%%%%%%%%%%%%%%%%%%%%%%%%%%%%%%

\section{Variants with PTime Model Checking}\label{Variants with PTime model checking}

The bounded $\GTS$ leads naturally to semantic variants of the $\mu$-calculus 
that can quite directly be shown to have PTime complete model checking. The main
point is to make use of the intimate relationship between
alternating Turing machines and semantic games. The novel systems of
semantics we consider resemble the $\Gamma$-bounded semantics but
utilize a simplified way to control
how many times $\mu$ and $\nu$-formulae can be repeated in semantic games.

To present the alternative 
semantic systems in detail, let $f$ be a map
that takes as input a model $\mathcal{M}$, a
point $w$ in the domain $W$ of $\mathcal{M}$ and a
sentence $\varphi$, outputting an ordinal.
We  assume that if $g$ is an isomorphism 
from $\mathcal{M}$ to $\mathcal{M}'$, then $f(\mathcal{M},w,\varphi)
= f(\mathcal{M}',g(w),\varphi)$.\footnote{We note that $f$ is too large to be a set, but
this is unproblematic to our study.} The function $f$ gives
rise to the \emph{simple $f$-bounded} \GTS
defined as follows.

\begin{definition}
Let $\mathcal{M}$ be a Kripke-model, $w\in W$ and $\varphi$ a sentence of the $\mu$-calculus.
The \emph{simple $f$-bounded evaluation game} $\mathcal{G}_f=(\mathcal{M},w,\varphi)$ is
played the same way as the $\Gamma$-bounded
evaluation game $\mathcal{G}_{\Gamma} =(\mathcal{M},w,\varphi,\Gamma)$, but with the following differences on
the way the number of remaining rounds is determined:
\begin{itemize}
\item
%$\bullet$
Eloise is controlling an ordinal $\gamma_{\exists}$ and Abelard an ordinal $\gamma_{\forall}$.
In the beginning of the game, these ordinals are set to be
equal to $f(\mathcal{M},w,\varphi)$.
\item
Every time a transition is made from some label symbol $X$ to the 
reference formula $\mu X\psi$, Eloise must lower the 
current value of $\gamma_{\exists}$. Similarly, when a 
transition is made from $Y$ to the reference formula $\nu Y \psi'$,
then Abelard must lower $\gamma_{\forall}$. (Note that the
values of $\gamma_{\exists}$ and $\gamma_{\forall}$
are never increased.)
%\item
%$\bullet$
%Unlike in the $\Gamma$-bounded
%evaluation games, from a position with %$X$ with a
%reference formula $\rf(X) = \mu X\psi$, we jump back to the 
%reference formula $\mu X\psi$ itself
%rather than to $\psi$ (like in the %$\Gamma$-bounded case).
%Similarly, if $\rf(Y) = \nu Y \theta$, we
%jump to $\nu Y \theta$ rather than $\theta$. This convention 
%makes it explicit that
%the $\mu$ and $\nu$-operators always require the
%players to lower the related ordinals, as
%specified by the following rule:
%\item
%$\bullet$
%Every time some formula beginning
%with $\mu$ is reached, Eloise must lower the
%current value of $\gamma_{\exists}$, and similarly,
%Abelard must lower the
%current value of $\gamma_{\forall}$ in positions with $\nu$. 
\end{itemize}

If $\gamma_{\exists} = 0$ and we
enter a position where Eloise should lower $\gamma_{\exists}$,
then Eloise loses the game, and
similarly, if $\gamma_{\forall} = 0$ and we enter a position where
Abelard should lower $\gamma_{\forall}$, Abelard loses. In
positions $(\mathcal{M},w',p)$ and $(\mathcal{M},w',\neg p)$ where $p$ is a proposition symbol, winning and losing is
defined in the same way as in $\Gamma$-bounded games.
We define truth of $\varphi$ in $\mathcal{M}$ at $w$
according to the \emph{simple $f$-bounded semantics} 
such that $\mathcal{M},w\Vdash^f\varphi$ iff Eloise has a winning
strategy in the game $\mathcal{G}_f = (\mathcal{M},w,\varphi)$ of
the simple $f$-bounded semantics.
\end{definition}

Henceforth we mostly talk about $f$-bounded semantics instead of simple $f$-bounded
semantics to keep the presentation simpler.

The naturalness and the general 
properties of $f$-bounded semantics of course depend heavily on the choice of $f$.
One of the simpler choices is to define $f\bigl(\mathcal{M},w,\varphi\bigr) = \card(\mathcal{M})
\cdot |\varphi|$ where $|\varphi|$ is the length of $\varphi$, i.e., the number of symbol occurrences.\footnote{Each proposition symbol $p$ and label $X$ counts as one symbol despite the possible
subindices: for example, $p_{1}$ is one
symbol, not two symbols.} This semantics has the natural property that in
finite models, if the players always lower their ordinal by the
minimum amount $1$, then, if the game
ends due to $\gamma_{\exists}$ or $\gamma_{\forall}$ being zero,
some state-subformula pair must have been repeated.
Furthermore, we can now prove the following result.
%(respectively, Abelard)
%can ensure that if $\gamma_{\exists}$
%(respectively, $\gamma_{\forall}$) becomes zero, then some %state-subformula
%pair must have been repeated in the game play.
%Furthermore, the following holds.

\begin{proposition}\label{ptimemodelche}
The  $\mu$-calculus model checking problem is PTime-complete
under simple $f$-bounded semantics
with $f(\mathcal{M},w,\varphi) = \card(\mathcal{M}) \cdot |\varphi|$.
\end{proposition}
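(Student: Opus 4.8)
The plan is to establish the two directions of PTime-completeness separately: membership in PTime (the upper bound) and PTime-hardness (the lower bound). Throughout I set $n := \card(\mathcal{M})$ and $m := |\varphi|$, so that $f(\mathcal{M},w,\varphi) = nm$ is computable from the input in polynomial time and is, in particular, a finite ordinal.

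For the upper bound, the key observation is that with this choice of $f$ the positions of the game $\mathcal{G}_f = (\mathcal{M}, w, \varphi)$ can be encoded as tuples $(v, \psi, \gamma_\exists, \gamma_\forall)$ with $v \in W$, $\psi \in \subf(\varphi)$ and $\gamma_\exists, \gamma_\forall \in \{0, 1, \dots, nm\}$, since the two controlling ordinals only ever decrease from their initial value $nm$. Hence there are at most $n \cdot m \cdot (nm+1)^2 = O(n^3 m^4)$ positions, which is polynomial in the size of the input. By Proposition~\ref{the: finite games} every play is finite, so the subgraph of positions reachable from the initial position, together with the successor relation, forms a finite well-founded (acyclic) game graph. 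The winner from each position can therefore be computed by backward induction over this graph: an ending position is labelled by its predetermined winner, a position where Eloise moves is won by Eloise iff some successor is, and a position where Abelard moves is won by Eloise iff all successors are. This is precisely an alternating reachability computation on a polynomial acyclic graph, which runs in polynomial time; it is exactly the relationship between the semantic game and alternating computation alluded to before the statement.

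For the lower bound I reduce from the alternating reachability problem $\problemclass{AR}$, which is PTime-complete (Section~\ref{alternatingreachabilitysection}). The reduction maps a pointed model $(\mathcal{M}, w)$ over the vocabulary $\{p_B, q_B\}$ to the model-checking instance $(\mathcal{M}, w, \chi)$ with the fixed sentence $\chi = \mu X(p_B \vee (q_B \wedge \Diamond X) \vee (\neg q_B \wedge \Box X))$ of Proposition~\ref{alternatingreachabilityproposition}, and I argue that $\mathcal{M}, w \Vdash^f \chi$ iff $(\mathcal{M}, w) \in \problemclass{AR}$. Since $\chi$ contains a single fixpoint operator and no $\nu$-operator, the ordinal $\gamma_\forall$ is never lowered and $\gamma_\exists$ behaves exactly like the unique clock of the $\Gamma$-bounded game with $\Gamma = nm$; consequently the simple $f$-bounded game for $\chi$ coincides with the $(nm)$-bounded game, giving $\mathcal{M}, w \Vdash^f \chi$ iff $\mathcal{M}, w \Vdash^{nm} \chi$. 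By Theorem~\ref{the: Equivalence of semantics} this equals $\mathcal{M}, w \vDash^{nm} \chi$, and since the least fixed point of the single operator of $\chi$ over a model with $n$ states is reached within $n \le nm$ approximant steps, the $(nm)$-bounded compositional semantics agrees with the standard semantics on $\chi$. Finally, Proposition~\ref{alternatingreachabilityproposition} gives $\mathcal{M}, w \vDash \chi$ iff $(\mathcal{M}, w) \in \problemclass{AR}$, completing the chain. As the reduction is logarithmic-space computable and uses the fixed formula $\chi$, this establishes PTime-hardness of both the combined and the data complexity.

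The routine parts, namely the counting of positions and the termination of the backward induction, are immediate once well-foundedness (Proposition~\ref{the: finite games}) is invoked, so I expect the point requiring genuine care to be the verification in the hardness direction that the simple $f$-bounded game for $\chi$ really does coincide with a $\Gamma$-bounded game, and hence, via Theorem~\ref{the: Equivalence of semantics} and the finite closure ordinal of a single fixpoint, with the standard semantics. This is where the two structurally different semantic definitions must be matched up, and it relies crucially on $\chi$ having exactly one fixpoint so that the per-formula clocks of the $\Gamma$-bounded game and the single shared ordinals of the $f$-bounded game cannot diverge.
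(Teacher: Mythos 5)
Your proof is correct and follows the same overall route as the paper's: membership via the observation that the bounded ordinals make the game's configuration space polynomial, and hardness via the identical reduction from alternating reachability using the fixed formula $\chi$. The paper phrases the upper bound as an alternating logarithmic-space machine simulating the game (with $\gamma_\exists,\gamma_\forall$ stored in binary), which is the same argument as your explicit backward induction on the polynomial acyclic game graph, via $\mathrm{ALOGSPACE}=\mathrm{PTime}$; just note that the well-foundedness you need is for the $f$-bounded game, not literally Proposition~\ref{the: finite games}, though the same argument applies since every return to a reference formula strictly decreases one of the two finite ordinals. Where you genuinely diverge is in verifying the reduction: the paper argues directly on the games --- if $B$ wins alternating reachability she can win without revisiting states, so $\card(\mathcal{M})$ rounds suffice for Eloise, and conversely any winning strategy for Eloise in the bounded game yields a winning strategy for $B$ in the unbounded reachability game --- whereas you route through Theorem~\ref{the: Equivalence of semantics} and the closure ordinal of the single fixpoint. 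Your route works, but your claim that the $f$-bounded game for $\chi$ \emph{coincides} with the $(nm)$-bounded game is off by one: the $f$-game starts $\gamma_\exists$ at $nm$ with no initial declaration, whereas the $(nm)$-bounded game forces Eloise to declare some $\gamma<nm$ at the outset; this is harmless because both budgets exceed the closure ordinal (which is at most $\card(\mathcal{M})$), but it should be stated as a mutual simulation rather than a literal identity of games.
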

\begin{proof}
%
%\anote{Much better here than in the appendix.}
To establish the upper bound, we define a Turing machine running in alternating logarithmic
space that directly simulates the model checking game (i.e.,
the semantic evaluation game) with any input $\mathcal{M},w,\varphi$.
The game positions where Eloise makes a move correspond to existential machine states while Abelard's positions correspond to universal states. We need some kind of a pointer indicating the current world of the Kripke structure 
and another pointer for the current subformula (i.e., node in the
syntax tree). Furthermore, we keep binary
representations of $\gamma_{\exists}$
and $\gamma_{\forall}$ in the memory. These binary strings 
are necessarily logarithmic in the input due to the
choice of $f$. Thus it is easy to see how the
required alternating Turing machine is constructed.

We obtain the lower bound via the alternating reachability game.
%(cf. Section \ref{alternatingreachabilitysection})
Recall Proposition \ref{alternatingreachabilityproposition}
and the formula $\chi$ there.
%$:=\ \mu X\bigl( p_B\, \vee\, (
%q_B \wedge \Diamond X)\, \vee\, (\neg q_{B}
%\wedge \Box X) \bigr)$
%from Proposition \ref{alternatingreachabilityproposition}.
We will
show that, as
in standard semantics, $\chi$ defines the
winning set of the alternating reachability game also
under our $f$-bounded semantics, i.e., $\chi$
is true in $\mathcal{M}$ at $w$ under our semantics if and only if
the player $B$ has a
winning strategy in the corresponding alternating reachability game. Indeed, it is
easy to show that when $B$ has a winning
strategy in an alternating reachability game, she can ensure a win so
that no state of the game is visited more than once.
Thus our choice of $f$ for the $f$-bounded
semantics guarantees that Eloise has a winning
strategy in the corresponding the semantic game.
And if Eloise has a winning strategy in a semantic game $\mathcal{G}_f(\mathcal{M},w,\chi)$,
then clearly $B$ wins
the corresponding alternating reachability game.
Thus, already with the fixed input formula $\chi$, 
model checking is PTime-hard.
\end{proof}

It is worth
noting here that in fact all
the systems with $f(\mathcal{M},w,\varphi) = \card(\mathcal{M})^k \cdot |\varphi|$
(for different positive integers $k$) have PTime-complete model checking: the proof of Proposition \ref{ptimemodelche} goes through with trivial modifications.

The $f$-bounded semantics with $f(\mathcal{M},w,\varphi) = \card(\mathcal{M}) \cdot |\varphi|$ is
obviously very different in spirit from the standard semantics, and the $f$-bounded semantics itself 
changes as we modify $f$. Also, several further variants of the semantics immediately 
suggest themselves, for example the possibility of setting different limits for Eloise and Abelard, including the possibility of no
limit at all. Also, letting different occurrences
of $\mu$ and $\nu$-formulae be
associated with different clocks similarly to the 
standard semantics, but without resetting the clocks, is one of 
many possible interesting scenarios.

Concerning the case where we do not set clocks at all but allow 
the players to play indefinitely long, winning occurs only 
when an atomic position with a literal (e.g., $p$ or $\neg p$) is reached.
Thus the games are 
not determined, i.e., it is possible that neither player has a winning 
strategy (consider, e.g., the formula $\mu X X$). This \emph{free semantics} for 
modal logic results in a system that is
essentially directly a fragment of the
general, Turing-complete logic $\mathcal{L}$ of \cite{turingcomplete}.
On the other hand, the different ``clocking scenarios'' described 
above (and further variants thereof) can be naturally imposed on $\mathcal{L}$,
and it would indeed make sense to
study related phenomena in that framework.

%%%%%%%%%%%%%%%%%%%%%%%%%%%%%%%%%%

\section{Reducing Model Checking to Alternating Reachability}\label{modelchecking}

In this section we study model checking of the $\mu$-calculus for
\emph{fixed sentences}.\footnote{The complexities of the related 
problems are commonly referred to as \emph{data complexity} as 
opposed to the \emph{combined complexity} of the standard
problem where the sentence is not fixed.}
We investigate model checking separately with respect to 
the standard semantics and with respect to $\Gamma$-bounded
semantics. %, and furthermore, we do not in general limit to finite models only. 
Given a sentence
$\varphi$ of the $\mu$-calculus, we use the following notation for the corresponding
model checking and bounded model checking problems: 
\begin{itemize}
\item $\problemclass{MC}(\varphi):=\{(\mathcal{M},w)\mid\mathcal{M}\text{ is finite and } \mathcal{M},w\vDash\varphi\}$, and
\item $\problemclass{BMC}(\varphi):=\{(\mathcal{M},w,\Gamma)\mid \mathcal{M}\text{ is finite and }
\mathcal{M},w\vDash^\Gamma\varphi\}$.
\end{itemize}
%Furthermore, we denote the restrictions of these classes to finite Kripke models by $\problemclass{MC}_\mathrm{fin}(\varphi)$ and $\problemclass{BMC}_\mathrm{fin}(\varphi)$, respectively.
%
%As complexity issues are
%relevant in the finite, we assume the finite models above
%are strictly speaking given via some standard encoding scheme.
Recalling the relevant notations from Section
\ref{alternatingreachabilitysection}, including the
formula $\chi$, we note, in particular, that
the alternating reachability problem 
$\problemclass{AR}$ is equal to $\problemclass{MC}(\chi)$. 
%and $\problemclass{AR}_\mathrm{fin}=\problemclass{MC}_\mathrm{fin}(\chi)$.
Our aim is to show that $\problemclass{AR}$ is a complete problem for 
model checking and bounded model checking:

\begin{proposition}\label{ar-complete}
%We will now define natural and easily computable 
For each formula $\varphi$ of the modal $\mu$-calculus there are LogSpace-computable model transformations $J_\varphi$ and $I_\varphi$ such that
for any finite Kripke model $\mathcal{M}$, state $w$ and ordinal $\Gamma$ we have
\begin{itemize}
\item[(1)]    $(\mathcal{M},w,\Gamma)\in\problemclass{BMC}(\varphi)\;\text{ iff }\;
    J_\varphi(\mathcal{M},w,\Gamma)\in\problemclass{AR}$, and
\item[(2)]    $(\mathcal{M},w)\in\problemclass{MC}(\varphi)\;\text{ iff }\;
    I_\varphi(\mathcal{M},w)\in\problemclass{AR}$.
\end{itemize}
Furthermore, neither $J_\varphi(\mathcal{M},w,\Gamma)$ nor $I_\varphi(\mathcal{M},w)$ contain infinite paths.
\end{proposition}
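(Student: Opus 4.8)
The plan is to construct the two transformations directly from the structure of the bounded evaluation game, turning each game tree into an alternating reachability instance. Recall from the definition of the $\Gamma$-bounded game that positions have the form $(w,\psi,c)$ where $\psi\in\subf(\varphi)$ and $c$ is a clock mapping. The key observation is that for a \emph{fixed} sentence $\varphi$, the branching structure of $\subf(\varphi)$ is fixed, so the roles of the two players at each syntactic node are determined: disjunctions and $\Diamond$-moves and $\mu$-label-decrements belong to Eloise, conjunctions and $\Box$-moves and $\nu$-label-decrements belong to Abelard. First I would build, for part (1), the model $J_\varphi(\mathcal{M},w,\Gamma)$ whose states encode game positions $(v,\psi,c)$, with the accessibility relation encoding the legal game transitions, and with the propositions $p_B,q_B$ chosen so that the alternating reachability game over $J_\varphi(\mathcal{M},w,\Gamma)$ simulates Eloise-versus-Abelard exactly. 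Concretely, Eloise is identified with player $B$: set $q_B$ true precisely at positions where Eloise moves, and set $p_B$ true precisely at the ending positions where Eloise wins (namely $(v,p,c)$ with $v\in V(p)$, $(v,\neg p,c)$ with $v\notin V(p)$, successful $\Box$-dead-ends, and $\nu$-label positions with clock value $0$). Positions where Abelard wins become dead ends for $B$ (no outgoing edges when it is $B$'s turn) or are simply non-$p_B$ sinks under Abelard's control, so that $B$ loses there.

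The correctness of this encoding reduces to matching winning conditions: $B$ has a winning strategy in the alternating reachability game over $J_\varphi(\mathcal{M},w,\Gamma)$ iff Eloise has a winning strategy in $(\mathcal{M},w,\varphi,\Gamma)$, and by the definition of $\Vdash^\Gamma$ together with Theorem~\ref{the: Equivalence of semantics} this is exactly $(\mathcal{M},w,\Gamma)\in\problemclass{BMC}(\varphi)$. The finiteness of the state set is the first thing to check: the positions $(v,\psi,c)$ range over $W$, over the fixed finite set $\subf(\varphi)$, and over clock mappings $c:\subf_{\mu\nu}(\varphi)\to\{\gamma\mid\gamma\leq\Gamma\}$. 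Since $\mathcal{M}$ is finite and over finite models it suffices to use finite clock values bounded by $\card(\mathcal{M})$ (as remarked after Corollary~\ref{the: bounded vs standard semantics}), the number of clock mappings is polynomially bounded for fixed $\varphi$, so $J_\varphi$ is LogSpace-computable: a transducer can enumerate the triples and, for each, output its $p_B,q_B$ labels and its successor edges by inspecting the local syntactic shape of $\psi$. The absence of infinite paths in $J_\varphi(\mathcal{M},w,\Gamma)$ is immediate from Proposition~\ref{the: finite games}, since every branch of the game tree—hence every path in the position graph reachable from the root—is finite; in the reachability instance this means $A$ (Abelard) cannot win by forcing an infinite play, so the ``$A$ wins infinite plays'' clause never triggers and the simulation is faithful.

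For part (2), the transformation $I_\varphi(\mathcal{M},w)$ is obtained by specializing part (1) to a clock value bound that makes the $\Gamma$-bounded semantics coincide with the standard semantics over the finite input model. By Corollary~\ref{the: bounded vs standard semantics}, taking $\Gamma=(\card(\mathcal{M}))^+$—and over finite models it suffices to take a finite $\Gamma$, e.g. $\Gamma=\card(\mathcal{M})+1$ or the cardinality of the model—yields $\mathcal{M},w\vDash\varphi$ iff $\mathcal{M},w\vDash^\Gamma\varphi$. So I would simply set $I_\varphi(\mathcal{M},w):=J_\varphi(\mathcal{M},w,\card(\mathcal{M}))$ (or the appropriate finite bound), which is still LogSpace-computable since $\card(\mathcal{M})$ is computable in LogSpace from the input encoding, and which inherits the no-infinite-paths property from part (1). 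Then $(\mathcal{M},w)\in\problemclass{MC}(\varphi)$ iff $\mathcal{M},w\vDash\varphi$ iff $\mathcal{M},w\vDash^\Gamma\varphi$ iff $I_\varphi(\mathcal{M},w)\in\problemclass{AR}$.

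I expect the main obstacle to be the bookkeeping in the clock-mapping component, specifically faithfully encoding the \emph{reset} behavior of the label-transition rules within the accessibility relation so that the alternating reachability game's moves correspond one-to-one with legal game rounds. The subtle point is the rule for a position $(v,X,c)$: when $\rf(X)=\mu X\psi$, Eloise both must pick $\gamma'<c(\rf(X))$ \emph{and} the clock mapping resets all $\theta\in\subf_{\mu\nu}(\varphi)$ with $\theta\in\subf(\psi)$ back to $\Gamma$, while leaving outer clocks untouched; this must be hard-coded into the successor edges emanating from every $X$-position, and getting the reset exactly right (so no spurious winning or losing positions are introduced) is where the proof needs care. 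The remaining verifications—that $p_B,q_B$ correctly assign turns and winning, that $B$'s strategies correspond to Eloise's strategies, and that the transducer works in logarithmic space—are routine once this encoding is fixed.
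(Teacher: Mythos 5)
Your proposal is correct and follows essentially the same route as the paper: both build the reachability instance directly from the game tree of the bounded evaluation game, label Eloise's turns with $q_B$ and her winning literal positions with $p_B$, invoke Theorem~\ref{the: Equivalence of semantics} for correctness of (1), obtain (2) by instantiating $\Gamma$ via Corollary~\ref{the: bounded vs standard semantics}, and derive the absence of infinite paths from Proposition~\ref{the: finite games}. The only cosmetic difference is that the paper handles Abelard-winning positions (false literals) by marking them $q_B$ so that $B$ is stuck, rather than listing extra $p_B$-positions for $\Box$-dead-ends and zero-clock $\nu$-labels, but the two encodings are equivalent.
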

%The same equivalences also hold for the finite restrictions $\problemclass{BMC}_\mathrm{fin}(\varphi)$, $\problemclass{MC}_\mathrm{fin}(\varphi)$ and $\problemclass{AR}_\mathrm{fin}$ of the classes, since $J_\varphi$ and $I_\varphi$ preserve finiteness.
%
%Proposition \ref{ar-complete} can also be seen as a generalization and strengthening of
%
%We will now construct reductions $I_\varphi$ from $\problemclass{MC}(\varphi)$ to $\problemclass{AR}$ and $J_\varphi$ from $\problemclass{BMC}(\varphi)$ to $\problemclass{AR}$.

\begin{proof}%[Proof of Proposition \ref{ar-complete}]
Recall that the game tree of an evaluation
game $\mathcal{G}=(\mathcal{M},w_0,\varphi,\tlimbound)$ is the tree 
$T(\mathcal{G})=(P_\mathcal{G},E_\mathcal{G})$, where $P_\mathcal{G}$ is the set of
positions $(v,\psi,c)$ of $\mathcal{G}$, and $E_\mathcal{G}$ is the successor 
position relation. We consider the following Kripke model that is obtained
 from $T(\mathcal{G})$
by adding proposition symbols encoding winning end positions of Eloise and 
positions in which it is Eloise's turn to move:
$\mathcal{T}_\mathcal{G}=(P_\mathcal{G},E_\mathcal{G},V_\mathcal{G})$,
where $V_\mathcal{G}:\{p_B,q_B\}\to\mathcal{P}(P_\mathcal{G})$
is the valuation
\begin{itemize}
\item $V_\mathcal{G}(p_B)=\{(v,\psi,c)\in P_\mathcal{G}\mid \psi\text{ is a literal and }
\mathcal{M},v\vDash\psi
%,\;\text{ or }\psi=X\text{ with }$ $\rf(\psi)=\nu X\theta\text{ and }c(\rf(\psi))=0
\}$, 
\item $V_\mathcal{G}(q_B)=\{(v,\psi,c)\in P_\mathcal{G}\mid \psi\text{ is of the form }
\theta\lor\eta,\,\Diamond\theta,\, \mu X\theta, \text{ or $X$ with}$ $\rf(\psi)=\mu X\theta,$ \\
\phantom{a}\qquad\qquad\qquad\qquad\qquad $\;\text{ or $\psi$ is a literal and }\mathcal{M},v\nvDash\psi\}$. 
\end{itemize}
Let $r_\mathcal{G}=(w_0,\varphi,c_0)$ be the initial position of $\mathcal{G}$. 
Observe now that, letting Eloise play in the role of $B$ and Abelard in the role of $A$, the alternating
reachability game on the Kripke-model $\mathcal{T}_\mathcal{G}$ with starting state
$r_\mathcal{G}$ is essentially identical with the game $\mathcal{G}$: the positions and the rules for moves are the same, and the winning 
conditions are equivalent.\footnote{For example, in a position $p=(v,\psi,c)$ with $\psi$ a literal such that $\mathcal{M},v\nvDash\psi$, $B$ loses the alternating 
reachability game since $p$ does not have any $E_\mathcal{G}$-successors.}
Thus, defining $J_\varphi(\mathcal{M},w_0,\Gamma):=
(\mathcal{T}_\mathcal{G},r_\mathcal{G})$, and using Theorem 
\ref{the: Equivalence of semantics}, we obtain the first equivalence (1).
Clearly $J_\varphi(\mathcal{M},w_0,\Gamma)$ can be computed from the input $(\mathcal{M},w_0,\Gamma)$ in LogSpace.

The %other 
transformation $I_\varphi$ can now be defined as follows: we let
$I_\varphi(\mathcal{M},w_0):=
J_\varphi(\mathcal{M},w_0,(\card(\mathcal{M}))^+)$. Denote 
$\Gamma^*:=(\card(\mathcal{M}))^+$ below.
By Corollary \ref{the: bounded vs standard semantics} and (1) we have
$$
	(\mathcal{M},w_0)\in\problemclass{MC}(\varphi) \;\; \text{ iff } \;\;
	(\mathcal{M},w_0,\Gamma^*)\in\problemclass{BMC}(\varphi) \;\; \text{ iff } \;\;
	J_\varphi(\mathcal{M},w_0,\Gamma^*)\in\problemclass{AR},
$$
whence (2) holds. Clearly $I_\varphi$ is LogSpace-computable.

Since game trees of bounded evaluation games are well-founded, it is clear that $J_\varphi(\mathcal{M},w,\Gamma)$ and $I_\varphi(\mathcal{M},w)$ do not contain infinite paths.
\end{proof}

%Furthermore, since $\mathcal{T}_\mathcal{G}$ is finite whenever the Kripke model $\mathcal{M}$ in $\mathcal{G}$ is finite, $J_\varphi$
%and $I_\varphi$ are also reductions from $\problemclass{BMC}_\mathrm{fin}(\varphi)$
%and $\problemclass{MC}_\mathrm{fin}(\varphi)$ to $\problemclass{AR}_\mathrm{fin}$ of very low complexity.%\footnote{In fact, assuming that $\Gamma$ is given as an ordered structure, a bisimilar copy of $J_\varphi(\mathcal{M},w_0,\Gamma)$ is definable in the input structure $(\mathcal{M},w_0,\Gamma)$ by quantifier free first-order formulas.}

Thus, checking the truth of an arbitrary sentence of the modal $\mu$-calculus
can be reduced via $I_\varphi$ to checking the truth of the simple
alternation free sentence $\chi$. %, and this holds both in the general and in the finite case. 
A related idea was used in \cite{BernetJW02} for showing that \emph{finite} parity games can be reduced to \emph{safety games} by adding explicit memory $M$ to the states. The elements of $M$ are essentially the same as our clock values in the finite case, except that they are given only for one of the players. This is why the resulting game in \cite{BernetJW02} is a safety game, 
%instead of a reachability game.
%Note that the safety games 
and this can lead to infinite plays---unlike our reachability games in $I_\varphi(\mathcal{M},w)$.

Proposition \ref{ar-complete} resembles also the ``Measured Collapse Theorem'' in
\cite{vardi}, which states that checking the truth of any sentence 
$\varphi$ of the $\mu$-calculus can be reduced to checking the truth of 
an alternation free sentence $\varphi'$. 
However, unlike in Proposition \ref{ar-complete},
the result of \cite{vardi}
is not a reduction to $\problemclass{MC}(\psi)$ for a fixed sentence $\psi$,
as $\varphi'$ depends on $\varphi$.
Moreover, the sentence $\varphi'$ is actually a translation of $\varphi$ to a different 
logic, called $\mu^\sharp$-calculus, whose semantics is based on an 
additional domain of tuples that can be related to our clock values.

It should be noted that the existence of LogSpace-computable reductions from
the model checking problems $\problemclass{BMC}$ and $\problemclass{MC}$ to $\problemclass{AR}$ follows
directly from the well-known fact that alternating reachability is a
PTime-complete problem. However, the main point here is that our reductions
$J_\varphi$ and $I_\varphi$ arise in a natural and straightforward way from
the bounded evaluation game. 
Moreover, except for LogSpace-computability, the proof above does not rely on any point on the assumption that the Kripke models are finite. Thus we see that the reductions $J_\varphi$ and $I_\varphi$ work on infinite Kripke models as well as on finite ones: for any Kripke model $\mathcal{M}$, state $w$ and ordinal $\Gamma$ we have
\begin{itemize}
\item    $\mathcal{M},w\models^\Gamma\varphi\;\text{ iff }\;
    J_\varphi(\mathcal{M},w,\Gamma)\models\chi$, and
\item    $\mathcal{M},w\models\varphi\;\text{ iff }\;
    I_\varphi(\mathcal{M},w)\models\chi$.
\end{itemize}

%%%%%%%%%%%%%%%%%%%%%%%%%%%%%%%%%%%%%%%%%%%%%%%%%%%%

%%%%%%%%%%%%%

\section{Conclusion and Future Directions}

Our study has focused on 
conceptual developments relating to the modal $\mu$-calculus, the 
main result being the new \GTS and its variants.
There are many relevant future research directions; we
mention here some of them. 
Firstly, it would be 
interesting to understand 
new \emph{clocking patterns} in general, in addition to the
finitely bounded, the $f$-bounded and the free
semantics discussed above. These investigations could naturally be
pushed to involve more general logics beyond modal logic, possibly 
containing, e.g., operators that modify the underlying models, and
thereby directly linking to the research on the 
general logical framework of \cite{turingcomplete}
and the research program of \cite{turingcomplete} and \cite{final}.

More concretely,
pinpointing the complexity of the satisfiability problem of
the modal $\mu$-calculus under finitely 
bounded semantics remains to be done. Also, it would be interesting to
investigate whether the scheme of using 
tuples of ordinals for defining our bounded $\GTS$
can be modified to work with single ordinals in a
natural way. Finally, using ordinals to
reduce arbitrary game arenas to well-founded
trees is in general an interesting 
research direction.\footnote{The problem of finding equivalent finite duration games for infinite duration games (on finite arenas) has been studied, e.g., in \cite{CycleGames} with an essentially different kind of method.} Relating to this and the
work in Section \ref{modelchecking}, it would be
particularly interesting to better understand reductions of
general games to
(well-founded) alternating reachability games.

%%%%%%%%%%%%%%%%%%%%%%

\section*{Acknowledgements}

Antti Kuusisto was funded by the Academy of 
Finland project \emph{Theory of Computational Logics},
grant numbers 324435 and 328987.
The work of Raine Rönnholm was partially supported by Jenny and Antti Wihuri Foundation. 
We thank the anonymous referees for comments and additional references.

%%%%%%%%%%%%%%%%%%%%%%

\nocite{*}
\bibliographystyle{eptcs}
\bibliography{mucalc}

\end{document}